\newcommand\numberthis{\addtocounter{equation}{1}\tag{\theequation}}
\DeclareMathOperator{\spn}{span}
\newcommand{\tpoint}[1]{\vspace{3mm}\par \noindent \refstepcounter{subsection}{\bf \thesubsection.}
  {\textit{#1} }}
\newtheorem{theorem}{Theorem}
\newtheorem{nlem}{Lemma}
\newtheorem{nprop}{Proposition}
\theoremstyle{remark}
\newtheorem{define}{Definition}
\numberwithin{define}{subsection}
\newtheorem*{nrem}{Remark}
\newtheorem*{notation}{Notation}
\numberwithin{equation}{section}
\numberwithin{nlem}{section}
\numberwithin{theorem}{section}
\numberwithin{nprop}{section}
\numberwithin{ncor}{section}
\DeclareMathOperator{\Ima}{Im}
\newcommand{\be}[1]{\begin{eqnarray} \label{#1}}
\newcommand{\ee}{\end{eqnarray}}
\title{The analysis of topological structure in data using persistent homology; applications to lexical word association networks}
\author{Matthew Pietrosanu, University of Alberta}
\date{\today}
\begin{document}

\begin{abstract}
Persistent homology is a technique recently developed in algebraic and computational topology well-suited to analysing structure in complex, high-dimensional data. In this paper, we exposit the theory of persistent homology from first principles and detail a novel application of this method to the field of computational linguistics. Using this method, we search for clusters and other topological features among closely-associated words of the English language. Furthermore, we compare the clustering abilities of persistent homology and the commonly-used Markov clustering algorithm and discuss improvements to basic persistent homology techniques to increase its clustering efficacy.
\end{abstract}
\maketitle
\section{Introduction\\}

\tpoint{Background and recent history of topology\\}

The history of topology begins with development of algebraic topology in a series of papers published in 1894 and 1895 by Henri Poincar\'{e} (\cite{tophistory}, Preface). This field of mathematics examines properties of geometric objects invariant under continuous, invertible transformations such as stretching or bending, called \textit{homeomorphisms}. These invariant properties include the notions of connectivity and \textit{genus}---informally, the number of ``holes'' in an object---and are irrespective of scale, shape, and any underlying coordinate system. In contrast, classical Euclidean geometry only considers the so-called \textit{rigid transformations} of translation, rotation, and reflection. As such, topology is much less strict in its classification of geometric bodies than is geometry. A classical and well-known example of the generality of topological classification is the equivalence of a coffee mug and a doughnut, as shown in \textbf{Figure \ref{coffee}}. These two objects are certainly not equivalent under the rigid transformations of Euclidean geometry.
\begin{figure}[!h]
\centering
\includegraphics[scale=0.40]{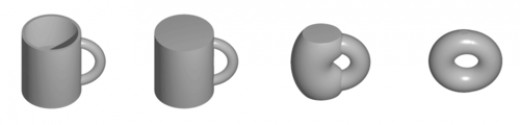}
\caption{\footnotesize{The transformation of a coffee mug into a doughnut under a continuous, invertible transformation. Intuitively, the transformation is continuous because it does not tear the object, and it is invertible because the transformation can be reversed.}}\label{coffee}
\end{figure}

Though rooted and grown in the realm of pure mathematics for most of its history, topology has recently piqued interest across numerous disciplines, including the biological \cite{bio}, medical \cite{med}, and computational \cite{image} sciences. Advances in these fields have necessitated the development of new methods for data analysis and visualisation, particularly where collected data is of high dimension---that is, measured by a large number of descriptors or independent variables---and difficult to interpret. Carlsson, a major contributor to the development of persistent homology, argues that metric- and coordinate-based analytic methods, specifically outside the field of physics, are often unjustified, subjective, and unnatural, especially where analysis is exploratory and intended to yield preliminary or qualitative results (\cite{topdata}, Section 1). As such, topology, with its classification flexibility and disregard of coordinates and metrics, presents itself as a useful data analytic tool. The application of topology to other areas has given rise to the field of \textit{computational topology}.\\

\tpoint{Persistent homology\\}\label{sec.ph}

Foundational to the methods of computational topology is \textit{persistent homology}. This technique makes use of data sampled from some unknown object, space, or phenomenon to recreate the original object's structure and approximate its topology. This recovery of topological properties has direct application to computer image processing \cite{image}, 3D-modeling, and network analysis \cite{sensor} in physics and computing science. In fields such as biology where data may not necessarily be sampled from a physical object, persistent homology can yield insight into the processes and phenomena underlying and creating the observed data.\\

This topological reconstruction is carried out by building structures, called \textit{simplicial complexes}, using collected data. Although there are numerous ways to construct a simplicial complex, these complexes generally serve to join data points that are deemed sufficiently ``close'' under some pre-specified notion of similarity that is in general not a metric. By observing the \textit{homology}---informally, the connected components, loops, and ``holes''---of a simplicial complex, we approximate the homology of the original, unknown object. Simplicial complexes can be created on any scale, effectively permitting the examination of the original object at any local or global scale or resolution.\\

A fundamental idea underlying persistent homology is that true, global features of the original, unknown object will be present in the simplicial complexees across a wide range of scales. Local features and random noise, on the other hand, will appear only over a limited range of scales. The global significance of a topological feature observed in a simplicial complex, then, is determined by the range of scales over which the feature exists, and is referred to as the feature's \textit{persistence} or \textit{lifetime}. The persistence of all observed features can be represented mathematically by a collection of intervals, and visually by a series of lines, referred to as a \textit{barcode}. Using this barcode, we can make inferences regarding the number of topological features of each dimension present in the original object, referred to as the object's \textit{Betti numbers} or, more precisely put, the dimension of the object's homology groups.\\

The nature of a topological feature is determined by its dimension: zero corresponds to connected components; one to loops, such as those of a circle or torus; two to voids, such as the space enclosed by a sphere; and so on. The number of zeroth-dimensional features are of particular significance in statistics and machine learning, as \textit{clustering}---the grouping of data based on some notion of similarity---is often required in these fields for pattern recognition and general data analysis. Clustering corresponds to the problem of finding connected components in persistent homology. See \textbf{Figure \ref{cluster}} for a visual example of clustering.\\

\tpoint{Linguistic applications\\} \label{linguisticapplications}

As an example to be carried throughout this paper, consider the words of the English language. Each word carries with it a certain related concept, idea, or notion: certain pairs or sets of words may overlap in the ideas associated with them. As a result, a given word may be more closely mentally-associated with one word than another. An example of the large-scale structure that this kind of association can form among words, called a \textit{word association network}, is given in \textbf{Figure \ref{sad}}. This notion of similarity between words, hereafter referred to as \textit{association strength}, allows a word association network to be divided into clusters by persistent homology or other clustering algorithms.\\

A knowledge of how the words of a language cluster together has numerous implications for research and everyday life (\cite{ling1}, Section 6). Such an understanding can suggest new experiments in psychology and psycholinguistics to investigate, for example, how the association of various concepts changes, grows, or degrades during childhood development or with increasing age. In artificial intelligence, the application of word clusters could aid in context recognition for both written and spoken language. Furthermore, electronic dictionaries could be made friendlier to language learners by listing closely-associated words and phrase patterns.\\

\begin{figure}[t]
    \begin{subfigure}[H]{0.5\textwidth}
        \centering
        \includegraphics[scale=0.5]{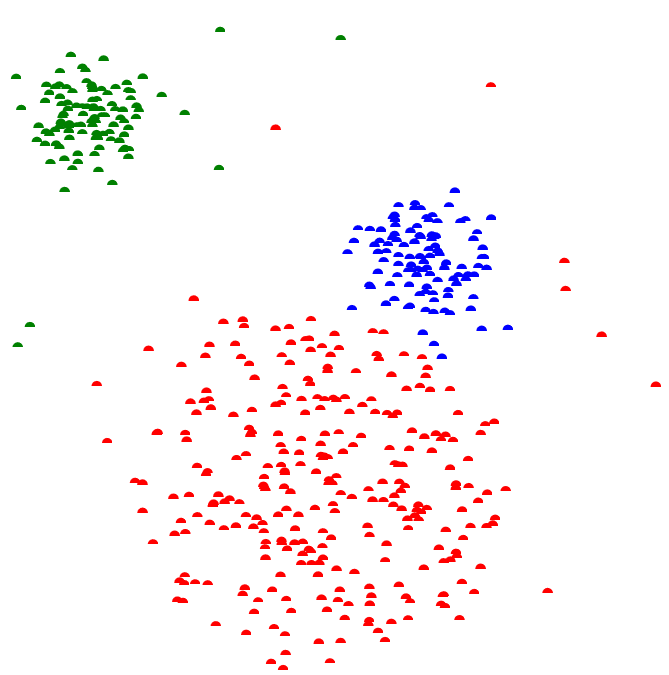}
        \caption{\footnotesize{An example of data clustering. Each symbol represents a data point; points assigned to the same cluster are represented by the same colour \cite{cluster.fig}.}}\label{cluster}
    \end{subfigure}
    ~ 
    \begin{subfigure}[H]{0.5\textwidth}
        \centering
        \includegraphics[scale=0.19]{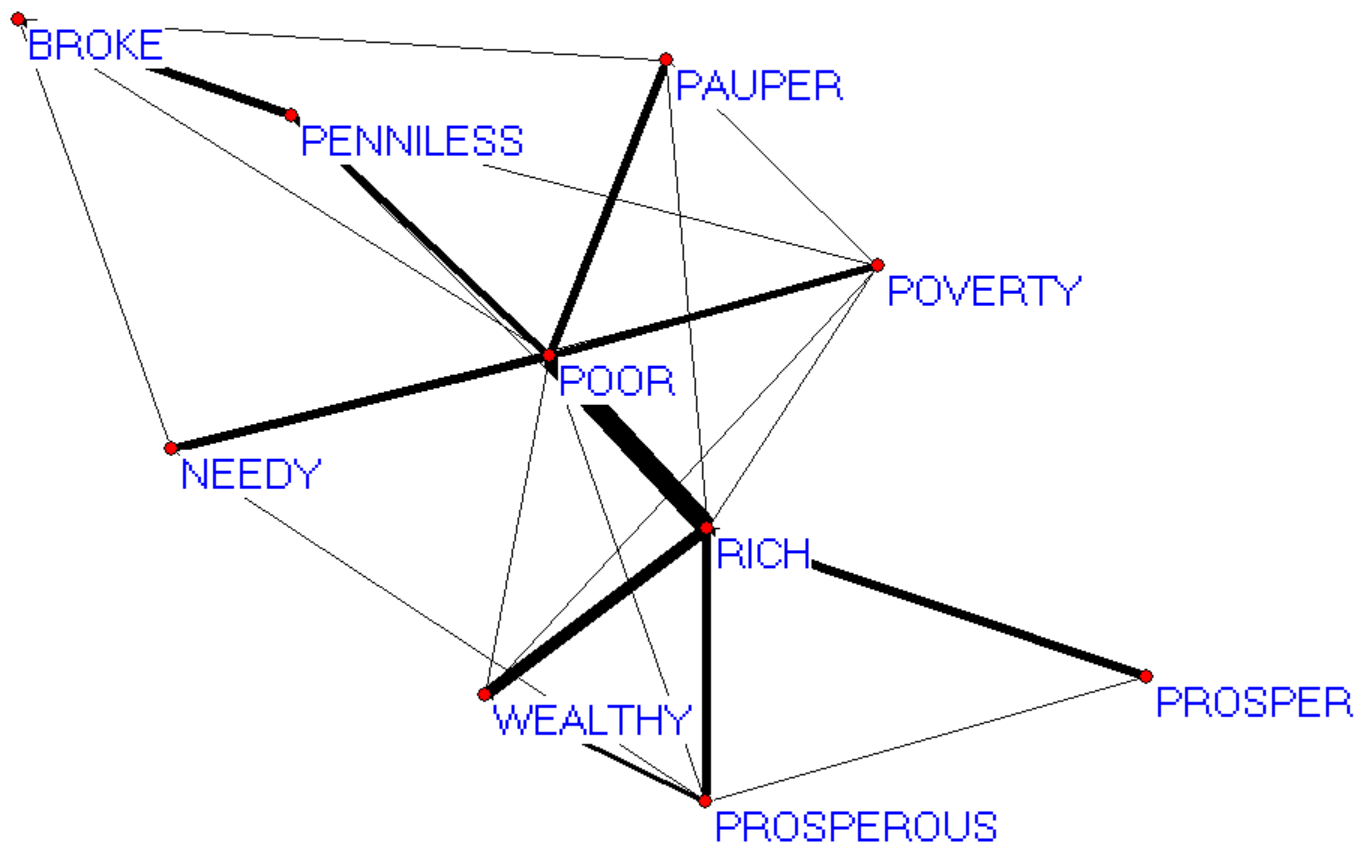}
        \caption{\footnotesize{A small group of words and the strength of the associations between them. The thickness of a line between words is representative of association strength.}}\label{sad}
    \end{subfigure}
    \caption{}
\end{figure}

\tpoint{Objectives and methods\\}

In this paper, we exposit the theory and method of persistent homology from first principles to the topics of simplicial complex construction, homology groups, Betti numbers, and persistence. We then detail the application of persistent homology to finite sets of data. For ease of visualisation, only data representable in some Euclidean space is considered, and we use the Vietoris-Rips complex construction for its computational efficiency. We note, however, that the methods presented generalise readily to data in any space as well as to other simplicial complex constructions. The theoretical portion of this paper loosely follows select sections from the text by Edelsbrunner (\cite{comptopol}).\\

Additionally, we detail a novel application of persistent homology to linguistics for the purpose of finding clusters of closely-associated English-language words. We compare the clustering abilities of persistent homology against Markov Clustering---an algorithm that has previously been applied to large-scale word association networks (\cite{ling1}, Section 4)---and use the standard graph theoretic modularity index \cite{modularity} to assess the quality of the clusters generated by each method.\\

In addition to the clusters found in the Edinburgh Associative Thesaurus (EAT) \cite{eat} by persistent homology, we present our results for higher-dimensional features. In particular, we include visual examples and offer interpretations of these features in a linguistic context.\\

The data contained in the publicly-available EAT was used in our investigation to compute association strengths between words. The program R, together with the TDA package for topological data analysis \cite{rtda}, as well as van Dongen's Markov Clustering algorithm and code \cite{mcl} were the major tools employed in our analysis. Furthermore, the Pajek Program for Large Network Analysis \cite{pajek} was used to create the visualisations presented in this paper unless otherwise noted.\\

\tpoint{Acknowledgements\\}

We acknowledge the early work of Herbert Edelsbrunner, Afra Zomorodian, Gunnar Carlsson, Robert Ghrist, and Peter Bubenik in developing the field of computational topology. I thank my supervising professor, Giseon Heo, for her guidance throughout this project, and Jisu Kim, among the authors of the R-TDA package, for his advice and technical assistance.\\

\tpoint{Structure of this paper\\}

We present the theory of persistent homology in Sections \ref{simplexsection} through \ref{simplicialhomologysection}. Section \ref{simplexsection} introduces the basic construction of \textit{simplices}, \textit{simplicial complexes}, and \textit{filtrations} on point-cloud data; in particular, we use the \textit{Vietoris-Rips} complex construction. Section \ref{boundscyclessection} develops the \textit{chain}, an algebraic structure on simplicial complexes that underlies persistent homology. We make specific note of a mapping between chains of different dimensions as well as various properties of this map, called the \textit{boundary operator}. Section \ref{simplicialhomologysection} presents \textit{simplicial homology}, the theory of homology groups in the setting of simplicial complexes. Also included is an explicit example demonstrating the calculation of a homology group for a given simplicial complex. In a final theoretical subsection, we briefly define of the \textit{Betti number}, an important numerical summary of a homology group, and present the notion of \textit{barcodes} and \textit{persistence}.\\

Our application of persistent homology to the EAT is detailed in Sections \ref{sectioncomponents} through \ref{interpretationsection}. Section \ref{sectioncomponents} introduces the EAT and defines other topics prerequisite for the proposed analysis, such as the \textit{Markov Clustering algorithm} and the \textit{modularity index} for assessing clustering quality. Clustering results are presented and discussed in Section \ref{resultssection}, with suggestions given for ways to improve the clustering efficacy of persistent homology. Lastly, Section \ref{interpretationsection} briefly examines specific clusters and other topological features found using persistent homology and offers an interpretation of these features in a linguistic context.\\

Following the main body is this paper are two appendices. Appendix Section \ref{appendixproofs} provides algebraic proofs and definitions deemed too technical for the main discussion. Appendix Section \ref{appendiximages} includes additional images of the topological features found in the EAT data using persistent homology.

\section{Simplicial Complexes\label{simplexsection}\\}

In this section, we develop the tools necessary to convert a finite set of points into objects containing information about the topology of the space from which the points were sampled. We build these objects, called simplicial complexes, up from their constituent parts using the Vietoris-Rips complex construction.\\

Our general goal and motivation, as in most statistical investigations, is to elucidate patterns and structure present in a given set of data. This data, when representable in some finite-dimensional Euclidean space, is called a \textit{point-cloud dataset}. In this section, we assume that all points are elements of a fixed, finite-dimensional Euclidean space.\\

\tpoint{Basic simplicial structure\\ \label{basicsimplex}}

We say that a set of points $\{x_0,x_1,\dots,x_k\}$ is \textit{affinely independent} if the set $\{x_i-x_0 \mid 1 \leq i \leq k\}$ is linearly independent. Essentially, affine independence redefines the usual linear independence of vectors by using a fixed, arbitrarily-chosen point of the set as the origin. In the above definition, we use $x_0$ to represent this new origin, although the choice of point in the set is independent of the set's affine independence. It is from sets of affinely independent points that we build basic simplicial structures.\\

\begin{define} \label{simplex}
For some non-negative integer $k$, define the \textit{$k$-simplex} corresponding to a set of $k+1$ affinely independent points $\{x_0,x_1,\dots,x_k\}$ to be the set of all linear combinations of the form $\sum_{i=0}^{k}\lambda_ix_i$, with $\lambda_i$ non-negative for all $i$ and $\sum_{i=0}^{k}\lambda_i=1$. Such a simplex will be denoted $\sigma_{\{x_0,x_1,\dots,x_k\}}$ or $\sigma$ where context is clear.\\
\end{define}

In this setting of Euclidean space, the simplices take on familiar forms: for $k=0,1,2,3$, a $k$-simplex is a point, line, closed triangular region, and solid tetrahedron, respectively, and as shown in \textbf{Figure \ref{eg-simplex}}. More generally, the $k$-simplex corresponding to a set of points is the smallest convex set containing the given points.\\

Note that we require affine independence in the above definition to preclude degenerate simplices from forming, such as the degenerate 2-simplex with all three of its vertices on a single line. Ultimately, affine independence prevents any three points from lying on the same line, any four points from lying in the same plane, and so on.\\

\begin{figure}[!h]
  \centering
    \includegraphics[scale=0.75]{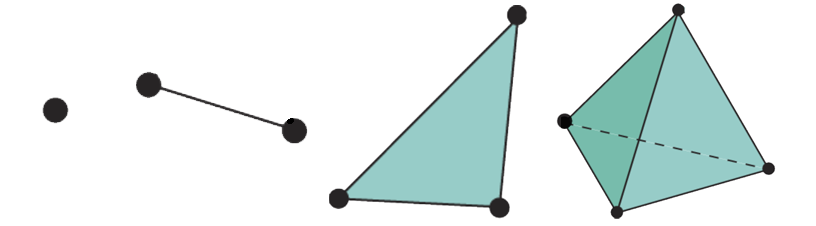}
  \caption{\footnotesize{$k$-simplices for $k=0,1,2,3$, respectively.}}\label{eg-simplex}
\end{figure}

We can also examine the sub-simplices that make up a given simplex. Following the notation of Definition \ref{simplex}, the notion of a \textit{face} can be introduced.\\

\begin{define} \label{face}
Let $\sigma_A$ be the simplex corresponding to the set $A$ of affinely independent points. We say that $\sigma_B$ is a face of $\sigma_A$ if $B$ is a subset of $A$.\\
\end{define}

For example, the faces of a 3-simplex---that is, a solid tetrahedron---consist precisely of the tetrahedron's four triangular sides, six edges, and four vertices.\\

Now equipped with the notion of simplices and faces, we can create increasingly-complex structures by gluing multiple simplices together.\\

\begin{define}
A \textit{simplicial complex} $X$ is a finite collection of simplices, satisfying the following conditions:
\begin{enumerate}
\item For every simplex $\sigma$ in $X$ and every face $\tau$ of $\sigma$, $\tau$ is also in $X$.
\item For any distinct simplices $\sigma_1$ and $\sigma_2$ in $X$, either $\sigma_1\cap\sigma_2$ is empty or is a face of both $\sigma_1$ and $\sigma_2$.
\end{enumerate}
\end{define}

Informally put, a simplicial complex $X$ contains the faces of all its simplices, and intersects simplices only along entire faces.

\tpoint{Constructions on point-clouds\\} \label{pcconstructions}

By building a simplicial complex from a point-cloud, we can begin to examine the topological properties of the space the data was sampled from. These topological properties include the number of connected components and ``holes'' of any dimension, such as the loops of an $n$-fold torus or the void enclosed by a sphere. While there are numerous ways to construct simplicial complexes on a given point-cloud, we present one method used widely for its computational efficiency.\\

In this subsection, we continue to use the notation of Definition \ref{simplex}.\\

\begin{define}\label{VR-Complex}
Let $P$ be a point-cloud in a Euclidean space equipped with some metric $d(\cdot,\cdot)$. Fix $\varepsilon\geq0$ a non-negative, real number. We construct the \textit{Vietoris-Rips simplicial complex} of radius $\varepsilon$ on $P$, denoted $V_\varepsilon(P)$, according to the following rules:
\begin{enumerate}
\item The 0-simplices of $V_\varepsilon(P)$ are taken to be the points of $P$.
\item Given $x$ and $y$ in $P$, the 1-simplex $\sigma_{\{x,y\}}$ is in $V_\varepsilon(P)$ if and only if $d(x,y)$ is at most $\varepsilon$.
\item If $A$ is a subset of $P$, the simplex $\sigma_A$ is in $V_\varepsilon(P)$ if and only if all faces of $\sigma_A$ are also in $V_\varepsilon(P)$.\\
\end{enumerate}
\end{define}

While $P$ is specified by the given point-cloud, the parameter $\varepsilon$ is free to be chosen arbitrarily. Intuitively, $\varepsilon$ acts as a tuning parameter to adjust the ``coarseness'' of the resulting Vietoris-Rips complex. Here, a pair of points are joined with an edge---that is, a 1-simplex---if and only if those points are within distance $\varepsilon$ of each other. By condition (3), a higher-order simplex is added to the complex only where all faces of the simplex are already present in the complex.\\

Given a point-cloud $P$, we can then construct a family of Vietoris-Rips complexes indexed by a single parameter $\varepsilon$, called a \textit{filtration} of complexes. As $\varepsilon$ increases, new topological features, such as connected components and loops, are created, and existing topological features become connected with other features.\\

Loosely speaking, the range of $\varepsilon$ for which a given topological feature exists is indicative of the feature's significance in the sample space. Features that persist for a wide range of $\varepsilon$ are likely to represent true, global features of the sample space, while those that disappear quickly are likely to be either local features or simply noise in the point-cloud data. We revisit and formally define this notion of feature significance in Section \ref{persistencebarcodes}.\\


\section{Boundaries and Cycles\\} \label{boundscyclessection}

In the following section, we continue to develop the theory of persistent homology by building algebraic structures, called \textit{chains}, on simplicial complexes. In particular, we focus on two kinds of chains central to homology, called \textit{boundaries} and \textit{cycles}, and examine a particular map relating the two, called the \textit{boundary operator}.\\

We make use of a number of well-known group theoretic results presumably present in any introductory-level text. Proofs of elementary claims are included Appendix Section \ref{appendixproofs}, while external references will be made to other works for advanced results outside the scope of this paper. We explicitly reference Goodman's abstract algebra text \cite{goodman} and loosely follow Munkres' algebraic topology text \cite{munkres}.\\

Our general goal and motivation in the following section is, informally, to develop the notion of a loop or cycle in a simplicial complex. The first step in doing so is to formalise the direction, or \textit{orientation}, of such loops. As in previous sections, we consider a single simplex before generalising to simplicial complexes. The following subsection recalls some prerequisite concepts from elementary group theory.\\

\tpoint{Permutations\\ \label{permutationsubsection}}

Recall that a \textit{permutation} of a finite set is a bijection from that set to itself. For example, one permutation of the set $\{1,2,3\}$ is the bijection $\pi_1$ that maps 1 to 3, 3 to 1, and 2 to itself. In other words, $\pi_1$ maps the sequence $(1,2,3)$ to $(3,2,1)$, as shown in \textbf{Figure \ref{pi1}} below. Of particular note are the \textit{transpositions}, that is, permutations that interchange exactly two elements, as in the example just given.

\begin{figure}[h]
$\pi_1: (1,2,3)\xrightarrow{1\leftrightarrow3}(3,2,1)$
\caption{\footnotesize{The permutation $\pi_1$ of the set $\{1,2,3\}$ as introduced above. Note that $\pi_1$ is a transposition because it only switches two elements of the sequence $(1,2,3)$, namely, 1 and 3.}} \label{pi1}
\end{figure}

Define an \textit{ordering} of a finite set $S$ to be an ordered sequence $(x_0,x_1,\dots,x_k)$ of the elements of $S$ in which every element of $S$ appears exactly once. The permutation corresponding to such an ordering is the permutation on $S$ that maps $x_i$ to $x_{i+1}$ for $i=0,1,\dots,k-1$, and maps $x_k$ to $x_0$. In other words,
\begin{equation*}
x_0 \mapsto x_1\ \mapsto \dots x_i \mapsto x_{i+1} \mapsto \dots \mapsto x_n \mapsto x_0,
\end{equation*}
where the arrows represent the mappings of this permutation. Intuitively, the permutation corresponding to an ordering simply ``cycles through'' its elements. For our purposes later on, an ordering will, intuitively-speaking, specify a ``path'' visiting all the vertices of a given simplex.\\

Now let us consider the following example and corresponding \textbf{Figure \ref{123eg}}. Let $\pi_2$ be the permutation of the set $\{1,2,3\}$ that maps 1 to 3, 2 to 1, and 3 to 2. Observe that $\pi_2$ can be viewed as a series of transpositions, first switching 3 with 2, and then switching 1 with 3. Thus we see that $\pi_2$ can be written as the composition of an even number of transpositions.\\

\begin{figure}[h]
$\pi_2: (1,2,3) \xrightarrow{2\leftrightarrow3} (1,3,2) \xrightarrow{1\leftrightarrow3}(3,1,2)$
\caption{\footnotesize{Overall, the above series of transpositions is equivalent to the permutation $\pi_2$ by mapping 1 to 3, 2 to 1, and 3 to 2.\\}}\label{123eg}
\end{figure}

By a well-known result of group theory, this result holds in general: any given permutation of a finite set with at least two elements can be represented as a composition of transpositions. Furthermore, although such a representation is not unique, the number of transpositions used to compose a given permutation will either be invariably even or odd (\cite{goodman}, Section 2.4).\\

In the example of \textbf{Figure \ref{123eg}}, observe that both $(1,2,3)$ and $(3,1,2)$ are orderings of the set $\{1,2,3\}$. As noted above, the permutation $\pi_2$ mapping $(1,2,3)$ to $(3,1,2)$ can be represented by the composition of two transpositions---an even number. Then by the above result, any composition of transpositions mapping $(1,2,3)$ to $(3,1,2)$ must also use an even number of transpositions. We then say that the these two orderings \textit{differ by an even number of transpositions}.\\

In general, we say that two orderings of the same set differ by an even number of transpositions if the permutation mapping one ordering to the other can be written as the composition of an even number of transpositions. Otherwise, we say that the two orderings \textit{differ by an odd number of transpositions}.\\

\tpoint{Oriented simplices\label{orientedsimplicessection}\\}

We next consider orientations of a simplex relative to an ordering of its vertices. Orientation constitutes a subtle yet necessary part of the algebraic structure we will soon impose on simplicial complexes. Here we will assume $\sigma_S$ to be a simplex on $S$ after the notation of Definition \ref{simplex}.\\

\begin{define}\label{orientedsimplex}
For $k$ strictly positive, an \textit{oriented $k$-simplex} is a $k$-simplex $\sigma_S$ together with an ordering of $S$. We say that two orderings of $S$ are of the \textit{same orientation} if and only if the two orderings differ by an even number of transpositions. Furthermore, two oriented $k$-simplices $\sigma_S$ and $\tau_S$ are said to be of the same orientation if their orderings differ by an even number of transpositions.
\end{define}
\begin{nrem}
Recall the result of the previous subsection stating that any permutation on a set of at least two elements can be decomposed into transpositions. Observe that this statement does not hold for singleton sets. Indeed, the only permutation on such a set is the identity map that can be written as the repeated composition of itself any even or odd number of times.\\

For this reason, we define an \textit{oriented 0-simplex} to be a 0-simplex with no orientation.\\
\end{nrem}

We now extend the notion of an oriented simplex to the set of $k$-simplices of a simplicial complex.

\begin{figure}[!h]
\centering
\includegraphics[scale=0.75]{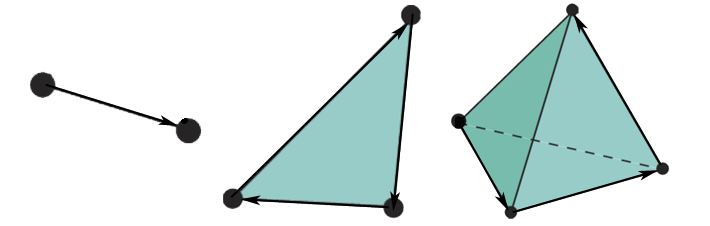}
\caption{\footnotesize{Visual representations of oriented $k$-simplices for $k=1,2,3$, respectively. In particular, note the direction of the arrow indicating the simplex's orientation in each case.\\}}\label{oriented}
\end{figure}

\begin{define}
Let $X$ be a simplicial complex with oriented $k$-simplices $\sigma_{i}$, for $i$ in some index set $I_k$. Fix any arbitrary field $\mathbb{F}$. Define a \textit{k-chain} of $X$ over $\mathbb{F}$ to be a formal sum of the oriented $k$-simplices of $X$, denoted $\sum_{i \in I_k}a_i\sigma_{i}$. Here, $a_i$ is an element of $\mathbb{F}$ for all $i$ in $I_k$. We denote by $C_k(X,\mathbb{F})$, or $C_k(X)$ where context is clear, the set of all $k$-chains of $X$ over $\mathbb{F}$.
\end{define}
\begin{nrem}
Informally-speaking, a $k$-chain can be thought of as assigning elements of $\mathbb{F}$ to the oriented $k$-simplices of $X$.
\end{nrem}
\begin{nrem}
Although the final theoretical results of this paper only require $\mathbb{F}$ to have a ring structure, we will restrict our discussion to fields only. This assumption not only appreciably simplifies the development of persistent homology, but also makes our results visually meaningful in the context of simplicial complexes.\\
\end{nrem}

We now further extend $k$-chains by defining an addition operation $\oplus$ on any two $k$-chains.\\

\begin{define} \label{chainop}
Where $+$ is the addition operation of $\mathbb{F}$, define the binary operator $\oplus$ for \textit{$k$-chain addition} via
\begin{align*}
\oplus:C_k(X,\mathbb{F})\times C_k(X,\mathbb{F}) &\rightarrow C_k(X,\mathbb{F})\\
\sum_{i \in I_k}a_i\sigma_{i} \oplus\sum_{i \in I_k}b_i\sigma_{i}&=\sum_{i\in I_k}(a_i+b_i)\sigma_{i}. \numberthis \label{groupaddition}
\end{align*}\\
\end{define}

As a final result of this subsection, we present a crucial property of the set of $k$-chains of a simplicial complex that ultimately makes persistent homology possible.\\

\begin{theorem} \label{abelian}
Fix a non-negative integer $k$, a simplicial complex $X$, and a field $\mathbb{F}$. The set $C_k(X,\mathbb{F})$ of $k$-chains of $X$ over $\mathbb{F}$, together with the $k$-chain addition $\oplus$, forms an Abelian group.\\
\end{theorem}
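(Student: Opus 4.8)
The plan is to verify the group axioms one at a time, in each case reducing the required property of $\oplus$ on $C_k(X,\mathbb{F})$ to the corresponding property of the field addition $+$ on $\mathbb{F}$, applied coefficient-by-coefficient. The key observation enabling this reduction is that, having fixed the oriented $k$-simplices $\sigma_i$ for $i \in I_k$, every $k$-chain is determined precisely by its tuple of coefficients $(a_i)_{i \in I_k}$ in $\mathbb{F}$, and that by Definition \ref{chainop} the operation $\oplus$ acts on these coefficients componentwise via $+$. Since $\mathbb{F}$ is a field, $(\mathbb{F},+)$ is itself an Abelian group, so each axiom we need will follow by invoking the matching axiom of $(\mathbb{F},+)$ in every coordinate.

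First I would record closure, which is immediate: the right-hand side of \eqref{groupaddition} is again a formal sum of the oriented $k$-simplices of $X$ with coefficients $(a_i + b_i)$ in $\mathbb{F}$, hence an element of $C_k(X,\mathbb{F})$. Next I would check associativity and commutativity. For $k$-chains $\sum_{i} a_i \sigma_i$, $\sum_{i} b_i \sigma_i$, and $\sum_{i} c_i \sigma_i$, commutativity reduces to the identity $a_i + b_i = b_i + a_i$ holding in $\mathbb{F}$ for each $i$, while associativity reduces similarly to $(a_i + b_i) + c_i = a_i + (b_i + c_i)$ for each $i$. Both hold because $+$ is commutative and associative on $\mathbb{F}$.

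It then remains to exhibit an identity element and inverses. I would take the identity to be the zero chain $\mathbf{0} = \sum_{i \in I_k} 0\,\sigma_i$, where $0$ denotes the additive identity of $\mathbb{F}$; applying \eqref{groupaddition} shows $\mathbf{0} \oplus \sum_{i} a_i \sigma_i = \sum_{i} (0 + a_i)\sigma_i = \sum_{i} a_i \sigma_i$, as required. For inverses, given $\sum_{i} a_i \sigma_i$ I would take $\sum_{i} (-a_i)\sigma_i$, where $-a_i$ is the additive inverse of $a_i$ in $\mathbb{F}$; then $\sum_{i} a_i \sigma_i \oplus \sum_{i} (-a_i)\sigma_i = \sum_{i} (a_i + (-a_i))\sigma_i = \mathbf{0}$. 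Both candidates are themselves legitimate $k$-chains, so they lie in $C_k(X,\mathbb{F})$.

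I do not anticipate any genuine obstacle here: the entire argument is a coordinatewise transfer of the field axioms for $(\mathbb{F},+)$, and the finiteness of $X$ (hence of $I_k$) ensures all formal sums are well-defined. The only point demanding the slightest care is the bookkeeping observation that all chains share the same fixed indexing set $I_k$ of oriented simplices, so that the componentwise definition of $\oplus$ makes sense and the proposed identity and inverses genuinely belong to $C_k(X,\mathbb{F})$. The degenerate case $k=0$ requires no special treatment, since $0$-chains are formal sums of the unoriented $0$-simplices and the same coordinatewise reasoning applies verbatim.
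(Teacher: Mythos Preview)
Your proposal is correct and follows essentially the same approach as the paper: verifying each group axiom by reducing it, via Definition~\ref{chainop}, to the corresponding axiom for the additive group $(\mathbb{F},+)$ applied coefficient-by-coefficient, and taking $\sum_{i\in I_k}0\,\sigma_i$ and $\sum_{i\in I_k}(-a_i)\sigma_i$ as the identity and inverse, respectively. If anything, your write-up is slightly more complete in that you explicitly treat associativity, which the paper's appendix proof leaves implicit.
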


\begin{nrem}
This proposition can be proven directly by appealing to the definition of an Abelian group. For brevity, we refer the reader to Appendix Section \ref{appendixabelian} for a rigorous proof, but make note of a few important observations here.\\

The required properties of \textit{closure}, \textit{associativity}, the existence of an \textit{identity element}, and the existence of \textit{inverse elements} for $C_k(X,\mathbb{F})$ under $\oplus$ follow readily from the same properties of the additive operation $+$ in $\mathbb{F}$.  The \textit{commutativity} of $\oplus$ similarly follows from the commutativity of $+$ in $\mathbb{F}$. In particular, let us consider the existence of inverses and a neutral element in $C_k(X,\mathbb{F})$. Where $0$ denotes the additive neutral element of $+$ in $\mathbb{F}$, observe that $C_k(X,\mathbb{F})$ has additive neutral element $\sum_{i\in I_k}0\sigma_i$.\\

Additionally, an element $\sum_{i\in I_k}a_i\sigma_i$ of $C_k(X,\mathbb{F})$ has additive inverse $\sum_{i\in I_k}(-a_i)\sigma_i$, where $-a_i$ is the additive inverse of $a_i$ in $\mathbb{F}$ under $+$. It is here that the necessity of simplex orientation can be seen: for an oriented simplex $\sigma_S$ on some set of points $S$, we say that the inverse of $\sigma_S$ in $C_k(X,\mathbb{F})$, denoted by $-\sigma_S$, is the same simplex $\sigma_S$ but with reverse orientation. This notion connects our intuition with the algebraic structure of $k$-chains in that to undo the ``loop'' implicit in an ordered $k$-simplex, we simply apply the reverse ``loop''---that is, the same simplex but with a reversed orientation.
 
\end{nrem}

\tpoint{Boundary operators\\} \label{boundaryoperatorssection}

In the previous subsection, an Abelian group structure was imposed on the set of $k$-simplices of a simplicial complex. We proceed by examining a map between oriented simplices of different dimension, as well as properties of this map fundamental to persistent homology.\\

For simplicity in this subsection, we suppress $\mathbb{F}$ in all notation outside of formal definitions, and will assume $\mathbb{F}$ to be fixed. Furthermore, we use 1 to represent the multiplicative neutral element of $\mathbb{F}$.\\

\begin{notation}
Denote $[x_0,x_1,\dots,x_k]$ to be the oriented $k$-simplex $\sigma_{\{x_0,x_1,\dots,x_k\}}$ with ordering $(x_0,x_1,\dots,x_k)$.
\end{notation}
\begin{notation}
For $j=0,1,\dots,k$, denote $[x_0,\dots,\hat{x}_j,\dots,x_k]$ to be the same oriented simplex but with $\hat{x}_j$ removed, namely, $[x_0,\dots,x_{j-1},x_{j+1}\dots,x_k]$.\\
\end{notation}

\begin{define} \label{boundary}
Define the \textit{boundary} of the oriented $k$-simplex $[x_0,x_1,\dots,x_k]$ to be
\begin{equation*}
\partial_k[x_0,x_1,\dots,x_k] = \sum_{j=0}^k(-1)^j[x_0,\dots,\hat{x}_j\dots,x_k].
\end{equation*}\\
\end{define}

To better illustrate the purpose and intuition of the boundary of a simplex, we present examples involving general $k$-simplices for $k=0,1,2,3$. See \textbf{Figure \ref{boundaryfig}} for a visual representation of the below examples (except for the trivial case where $k=0$). Let $a$, $b$, $c$, and $d$ be arbitrary points. By Definition \ref{boundary}, observe that
\begin{align}
\partial_0[a]&=0,\label{0bound}\\
\partial_1[a,b]&=[b]-[a],\label{1bound}\\
\partial_2[a,b,c]&=[b,c]-[a,c]+[a,b],\label{2bound}\\
\text{and } \partial_3[a,b,c,d]&=[b,c,d]-[a,c,d]+[a,b,d]-[a,b,c].\label{3bound}
\end{align}

\begin{figure}[!h]
  \centering
    \includegraphics[scale=1]{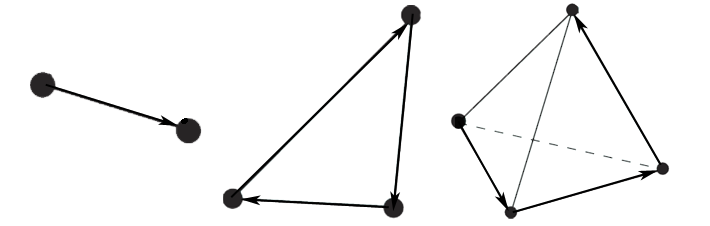}
  \caption{\footnotesize{The boundary of a $k$-simplex for $k=1,2,3$, respectively. Note how the orientation as indicated by the arrows in the latter two diagrams corresponds to the right sides of Equations \ref{2bound} and \ref{3bound}, respectively.\\
Equation \ref{1bound}: a 0-chain remains, the boundary of the original line segment.\label{boundaryfig}\\
Equation \ref{2bound}: a 1-chain remains, forming a loop on the boundary of the original triangular region.\\
Equation \ref{3bound}: a 2-chain remains, forming a loop using the boundary faces of the original tetrahedron.\\}}
\end{figure}

The boundary of a simplex can be developed further through generalisation to the \textit{boundary operator} for $k$-chains, as shown below.\\

\begin{define} \label{boundaryop}
Define the \textit{dimension $k$ boundary operator} on the simplicial complex $X$ via
\begin{align*}
\partial_{k,X}: C_k(X,\mathbb{F}) &\rightarrow C_{k-1}(X,\mathbb{F})\\
\partial_{k,X}\Big(\sum_{i\in I_k}a_i\sigma_i\Big) &= \sum_{i\in I_{k}}a_i\partial_{k-1}(\sigma_i)
\end{align*}
\end{define}
\begin{notation}
As context will make clear whether we are dealing with $\partial_k$ or $\partial_{k,X}$---the boundary of a simplex or of a chain, respectively---we will from now suppress notation and write $\partial_k$ in both cases, for simplicity.\\
\end{notation}

Observe that the boundary operators connect the chain groups of a simplicial complex $X$ by the sequence of maps
\begin{equation} \label{chainmap}
\dots \xrightarrow{\partial_{k+2}}C_{k+1}(X) \xrightarrow{\partial_{k+1}}C_{k}(X) \xrightarrow{\partial_{k}}C_{k-1}(X)\xrightarrow{\partial_{k-1}}\dots \xrightarrow{\partial_{2}}C_{1}(X) \xrightarrow{\partial_{1}}C_{0}(X)\xrightarrow{\partial_0}C_{-1}(X)=\{0\}.
\end{equation}
\begin{nrem}
In order to define $\partial_0$, note that we include $C_{-1}(X)$ as the trivial group $\{0\}$: indeed, the boundary of a 0-simplex is empty, so the boundary operator $\partial_0$ is consistent with the notation introduced thus far.\\
\end{nrem}
As final preparatory work before the formal introduction of simplicial homology groups, we briefly examine properties of the boundary operator and of Equation \ref{chainmap} in the next subsection.\\

\tpoint{Cycles and boundaries\\ \label{cyclesbounds}}

We now focus on properties of the boundary operator previously introduced in Definition \ref{boundaryop}, and in particular, the relationship between the operator's image and kernel. For simplicity, we continue to suppress $\mathbb{F}$ in our notation outside of formal definitions, and assume $\mathbb{F}$ to be fixed.\\

\begin{nlem}\label{homo}
The dimension $k$ boundary operator $\partial_k$ is a homomorphism of groups from $C_k(X)$ to $C_{k-1}(X)$ for all $k\geq1$.\\
\end{nlem}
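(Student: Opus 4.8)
The plan is to verify directly the single condition defining a group homomorphism—that $\partial_k$ respects the group operation, i.e. that $\partial_k(c_1 \oplus c_2) = \partial_k(c_1) \oplus \partial_k(c_2)$ for all $k$-chains $c_1, c_2 \in C_k(X)$. Theorem \ref{abelian} guarantees that both $C_k(X)$ and $C_{k-1}(X)$ are Abelian groups under $\oplus$, so this one identity is exactly what must be checked. Since $\partial_k$ on chains is defined in Definition \ref{boundaryop} as the linear extension of the boundary on individual simplices, I expect the whole statement to reduce to the distributivity of multiplication over addition in the field $\mathbb{F}$.

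First I would fix two arbitrary $k$-chains, written with respect to the common indexing of the oriented $k$-simplices of $X$ as $c_1 = \sum_{i \in I_k} a_i \sigma_i$ and $c_2 = \sum_{i \in I_k} b_i \sigma_i$. Applying Definition \ref{chainop}, their sum is $c_1 \oplus c_2 = \sum_{i \in I_k}(a_i + b_i)\sigma_i$, and applying Definition \ref{boundaryop} gives $\partial_k(c_1 \oplus c_2) = \sum_{i \in I_k}(a_i + b_i)\,\partial_k(\sigma_i)$, where each $\partial_k(\sigma_i)$ is the $(k-1)$-chain supplied by Definition \ref{boundary}.

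The core step is to compare coefficients on each side. For a fixed oriented $(k-1)$-simplex $\tau$, I would expand each boundary $\partial_k(\sigma_i)$ as the signed sum over its faces and collect the contributions to $\tau$; the resulting coefficient of $\tau$ in $\partial_k(c_1 \oplus c_2)$ is a finite sum of terms of the form $(a_i + b_i)(\pm 1)$. Distributivity in $\mathbb{F}$ splits each such term as $a_i(\pm 1) + b_i(\pm 1)$, and the definition of $\oplus$ then lets me regroup: the $a_i$-terms assemble into the coefficient of $\tau$ in $\partial_k(c_1)$ and the $b_i$-terms into that of $\tau$ in $\partial_k(c_2)$. Since this identity of coefficients holds for every $(k-1)$-simplex $\tau$, the chains $\partial_k(c_1 \oplus c_2)$ and $\partial_k(c_1) \oplus \partial_k(c_2)$ coincide, which establishes the homomorphism property.

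I do not anticipate a genuine obstacle: the entire content is the additivity of a linearly-extended map, and the only field axiom actually invoked is distributivity. The one point to handle with care is purely notational—because $\partial_k(\sigma_i)$ is itself a chain (a signed sum over the faces of $\sigma_i$), one must keep the summation index $i$, which ranges over $k$-simplices, distinct from the label $\tau$ of a generic $(k-1)$-simplex, and must justify rearranging the resulting double sum by appealing to the commutativity and associativity of $\oplus$ on $C_{k-1}(X)$, which is Abelian by Theorem \ref{abelian} applied with $k-1$ in place of $k$ (valid since $k \geq 1$).
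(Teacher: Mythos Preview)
Your proposal is correct and follows essentially the same approach as the paper: both verify directly that $\partial_k(c_1\oplus c_2)=\partial_k(c_1)\oplus\partial_k(c_2)$ by applying the definitions of $\oplus$ and $\partial_k$ in sequence. The paper's version is slightly more compressed---it passes from $\sum_{i}(a_i+b_i)\partial_k(\sigma_i)$ to $\sum_i a_i\partial_k(\sigma_i)\oplus\sum_i b_i\partial_k(\sigma_i)$ in one line ``by definition of $\oplus$''---whereas you make that step honest by expanding each $\partial_k(\sigma_i)$ and comparing coefficients on a fixed $(k-1)$-simplex $\tau$, but the underlying argument is identical.
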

\begin{proof}
The claim can be proven directly by verifying the definition of a group homomorphism. We will show that the boundary operator $\partial_k$ respects the group operation $\oplus$ of Definition \ref{chainop}. Let $\sum_{i \in I_k}a_i\sigma_i$ and $\sum_{i \in I_k}b_i\sigma_i$ be $k$-chains of a simplicial complex $X$. Observe that\\
\begin{align*}
\partial_k\Big(\sum_{i \in I_k}a_i\sigma_i \oplus\sum_{i \in I_k}b_i\sigma_i\Big) &= \partial_k\Big(\sum_{i \in I_k}(a_i+b_i)\sigma_i\Big) &\text{(by definition of $\oplus$)}\\
&= \sum_{i \in I_k}(a_i+b_i)\partial_k(\sigma_i) &\text{(by definition of $\partial_k$)}\\
&= \sum_{i \in I_k}a_i\partial_k(\sigma_i) \oplus \sum_{i \in I_k}b_i\partial_k(\sigma_i) &\text{(by definition of $\oplus$)}\\
&= \partial_k\Big(\sum_{i \in I_k}a_i\sigma_i\Big) \oplus \partial_k\Big(\sum_{i \in I_k}b_i\sigma_i\Big) &\text{(by definition of $\partial_k$)}\\
\end{align*}
We have shown that the boundary operator $\partial_k$ respects the $C_k(X)$ group operation $\oplus$. Therefore, the boundary operator is a homomorphism of groups.\\
\end{proof}

Recall from group theory that both the \textit{image} and \textit{kernel} of a group homomorphism are themselves groups (\cite{goodman}, Proposition 2.4.12). Then, as a corollary to Lemma \ref{homo}, the image and kernel of the boundary operator are groups, both of which we examine below.\\

\begin{notation}
For non-negative $k$, we denote by $0_k$ the trivial element of the $k$-chain group $C_k(X)$. In other words, we define $0_k=\sum_{i\in I_k}0\sigma_i$. For consistency with Equation \ref{chainmap}, we further denote $0_{-1}$ to be simply 0.\\
\end{notation}

\begin{define} \label{kcycle}
A \textit{$k$-cycle} is a $k$-chain with trivial boundary. More precisely, a $k$-cycle of a simplicial complex $X$ is a $k$-chain $\sum_{i \in I_k}a_i\sigma_{i}$ of $X$ such that
\begin{equation*}
\partial_k\Big(\sum_{i \in I_k}a_i\sigma_{i}\Big)=0_{k-1}.
\end{equation*}
We denote the set of $k$-cycles of a simplicial complex $X$ by $Z_k(X,\mathbb{F})$, or otherwise by $Z_k(X)$ or $Z_k$ where context is clear.\\
\end{define}

It can immediately be seen that the set of $k$-cycles is, by definition, the kernel of the dimension $k$ boundary operator. Therefore, as a corollary to Lemma \ref{homo}, $Z_k(X)$ is a subgroup of $C_k(X)$, for any simplicial complex $X$.\\

\begin{define} \label{kboundary}
A \textit{$k$-boundary} is the boundary of a $(k+1)$-chain. Put precisely, a $k$-chain $\sum_{i \in I_k}b_i\sigma_{i}$ of a simplicial complex $X$ is a $k$-boundary of $X$ if there exists a $(k+1)$-chain $\sum_{i \in I_{k+1}}a_i\sigma_{i}$ in $C_{k+1}(X)$ such that
\begin{equation*}
\partial_{k+1}\Big(\sum_{i \in I_{k+1}}a_i\sigma_{i}\Big)=\sum_{i \in I_k}b_i\sigma_{i}.
\end{equation*}
We will denote the set of $k$-boundaries of $X$ by $B_k(X,\mathbb{F})$ or, where context is clear, simply by $B_k(X)$ or $B_k$.\\
\end{define}

Once again, we see immediately that the set of $k$-boundaries is, by definition, the image of the dimension $(k+1)$ boundary operator. Therefore, as a corollary to Lemma \ref{homo}, $B_k(X)$ is a subgroup of $C_k(X)$ for any simplicial complex $X$.\\

The rest of this subsection will prove a relationship between the set of $k$-cycles and $k$-boundaries.

\begin{nlem} \label{boundofbound}
For any $k\geq0$, the image of a $k$-boundary under the dimension $k$ boundary operator is the trivial $(k-1)$-chain. Equivalently, for any integer $k\geq1$,
\begin{equation*}
\partial_{k-1}\partial_{k}\Big(\sum_{i \in I_{k+1}}a_i\sigma_i\Big)=0_{k-2}.
\end{equation*}\\
\end{nlem}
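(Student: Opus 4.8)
The plan is to reduce the statement to a single-simplex computation and then exploit a sign-cancellation in a double sum. The essential content is that two consecutive boundary operators compose to the zero map, i.e.\ $\partial_{k-1}\partial_k=0$; the assertion about $k$-boundaries is then the instance that anything in the image of $\partial_{k+1}$ (a $k$-boundary, by Definition \ref{kboundary}) lies in the kernel of $\partial_k$. Since the boundary operator on chains is defined in Definition \ref{boundaryop} as the $\mathbb{F}$-linear extension of the boundary of a simplex, and since each $\partial_k$ is a group homomorphism by Lemma \ref{homo}, it suffices to prove the identity $\partial_{k-1}\partial_k[x_0,\dots,x_k]=0_{k-2}$ on a single oriented simplex $[x_0,\dots,x_k]$; the general case follows by applying $\partial_{k-1}\partial_k$ term by term to a formal sum.

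First I would expand using Definition \ref{boundary}, writing $\partial_k[x_0,\dots,x_k]=\sum_{j=0}^{k}(-1)^j[x_0,\dots,\hat{x}_j,\dots,x_k]$, and then apply $\partial_{k-1}$ to each surviving $(k-1)$-simplex, again by Definition \ref{boundary}. This yields a double sum indexed by the vertex $x_j$ deleted first and the vertex $x_i$ deleted second; every resulting term is a $(k-2)$-simplex obtained by removing two of the original vertices.

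The crux, and the step demanding the most care, is the bookkeeping of signs. When a second vertex $x_i$ is deleted from $[x_0,\dots,\hat{x}_j,\dots,x_k]$, the sign it contributes is governed by its \emph{position} in the already-reduced simplex, not by its original index: if $i<j$ it still occupies position $i$ and contributes $(-1)^i$, whereas if $i>j$ the earlier deletion of $x_j$ has shifted it down to position $i-1$ and it contributes $(-1)^{i-1}$. The key observation to verify is that each unordered pair $\{x_i,x_j\}$ with $i<j$ arises exactly twice — once by deleting $x_j$ then $x_i$, and once by deleting $x_i$ then $x_j$ — and that the two copies of the same $(k-2)$-simplex carry the opposite signs $(-1)^{i+j}$ and $(-1)^{i+j-1}$. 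Hence the double sum cancels in pairs and equals the trivial chain $0_{k-2}$.

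Finally I would pass back to arbitrary chains: invoking Lemma \ref{homo} gives $\partial_{k-1}\partial_k\big(\sum_{i\in I_k}a_i\sigma_i\big)=\sum_{i\in I_k}a_i\,\partial_{k-1}\partial_k(\sigma_i)=0_{k-2}$, which is exactly the displayed identity and, at the shifted index, says that every $k$-boundary lies in $\ker\partial_k$. I expect the sign analysis in the double sum to be the only genuine obstacle; the reductions to a single simplex and back are routine appeals to linearity.
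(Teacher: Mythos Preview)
Your proposal is correct and follows essentially the same approach as the paper's own proof: reduce to a single oriented simplex by linearity, expand $\partial_{k-1}\partial_k[x_0,\dots,x_k]$ into a double sum, track the position-dependent sign shift $(-1)^{i}$ versus $(-1)^{i-1}$ according to whether the second deleted vertex lies before or after the first, observe pairwise cancellation, and then extend back to arbitrary chains. The only cosmetic difference is that the paper writes the double sum with the roles of the indices $i$ and $j$ swapped and passes through an explicit relabeling of one summand before cancelling.
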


We will again refer the reader to Appendix Section \ref{appendixbound} for a rigorous proof of this claim, and instead give an example below.\\

Consider the general oriented 3-simplex $[a,b,c,d]$. The image of this simplex under the dimension 3 boundary operator is
\begin{align*}
\partial_3[a,b,c,d] &= [b,c,d]-[a,c,d]+[a,b,d]-[a,b,c].
\end{align*}
Now applying the dimension 2 boundary operator to this result, we observe that
\begin{align*}
\partial_2\big(\partial_3[a,b,c,d]\big) &= \partial_2[b,c,d] -\partial_2[a,c,d] + \partial_2[a,b,d] - \partial_2[a,b,c]\\
&=\big([c,d]-[b,d]+[b,c]\big) - \big([c,d]-[a,d]+[a,c]\big) + \big([b,d]-[a,d]+[a,b]\big) - \big([b,c]-[a,c]+[a,b]\big)\\
&= 0_1,
\end{align*}
Informally put, the boundary of a boundary is trivial.\\

\begin{theorem}
For a simplicial complex $X$, every $k$-boundary of X is a $k$-cycle of $X$.\\
\end{theorem}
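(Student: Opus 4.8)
The plan is to show that every $k$-boundary is a $k$-cycle, which by the definitions of $B_k(X)$ and $Z_k(X)$ amounts to showing the containment $B_k(X) \subseteq Z_k(X)$. Unwinding the definitions, a $k$-boundary is an element of the image of $\partial_{k+1}$, and a $k$-cycle is an element of the kernel of $\partial_k$. So the statement is precisely the assertion that $\operatorname{Im} \partial_{k+1} \subseteq \ker \partial_k$, which is exactly the content of Lemma \ref{boundofbound}: the composition of two consecutive boundary operators is trivial.

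Concretely, I would begin by taking an arbitrary $k$-boundary $\beta \in B_k(X)$. By Definition \ref{kboundary}, there exists a $(k+1)$-chain $\sum_{i \in I_{k+1}} a_i \sigma_i$ such that
\begin{equation*}
\partial_{k+1}\Big(\sum_{i \in I_{k+1}} a_i \sigma_i\Big) = \beta.
\end{equation*}
To prove $\beta$ is a $k$-cycle, I must verify by Definition \ref{kcycle} that $\partial_k(\beta) = 0_{k-1}$. Substituting the expression for $\beta$ gives
\begin{equation*}
\partial_k(\beta) = \partial_k\partial_{k+1}\Big(\sum_{i \in I_{k+1}} a_i \sigma_i\Big),
\end{equation*}
and by Lemma \ref{boundofbound} (applied with the index shifted appropriately so that the composition is of two consecutive boundary maps) this double application yields the trivial chain $0_{k-1}$. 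Hence $\beta \in Z_k(X)$, completing the proof.

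The bulk of the real work here has already been done in Lemma \ref{boundofbound}, so the proof of this theorem is essentially an immediate corollary: it is a straightforward translation between the set-theoretic language of ``boundary'' and ``cycle'' and the algebraic language of image and kernel. I anticipate no genuine obstacle in the argument itself. The only point requiring minor care is bookkeeping of the subscript indices on the boundary operators---ensuring that the operator hitting $\beta$ is $\partial_k$ and the operator producing $\beta$ is $\partial_{k+1}$, so that their composition $\partial_k \partial_{k+1}$ matches exactly the form $\partial_{k-1}\partial_k$ covered by Lemma \ref{boundofbound} after reindexing. Once the indices are aligned, the conclusion follows directly.
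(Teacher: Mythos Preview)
Your proposal is correct and follows essentially the same approach as the paper: take an arbitrary $k$-boundary, write it as $\partial_{k+1}$ of some $(k+1)$-chain, apply $\partial_k$, and invoke Lemma \ref{boundofbound} to conclude the result is $0_{k-1}$. Your observation about the index shift is exactly right and is the only subtlety involved.
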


\begin{proof}
Using the previously-introduced notation, observe that the given statement is equivalent to $B_k(X)\subset Z_k(X)$. We prove this below.\\

Let $\sum_{i \in I_k}b_i\sigma_i$ be any element of $B_k(X)$. By definition of a $k$-boundary, there exists some $(k+1)$-chain $\sum_{i \in I_{k+1}}a_i\sigma_i$ such that
\begin{equation}
\partial_{k+1}\Big(\sum_{i \in I_{k+1}}a_i\sigma_i\Big)=\sum_{i \in I_k}b_i\sigma_i. \label{subsetpf}
\end{equation}\\

Applying $\partial_k$ to the chosen $k$-boundary, observe that
\begin{align*}
\partial_{k}\Big(\sum_{i \in I_k}b_i\sigma_i\Big) &= \partial_{k}\partial_{k+1}\Big(\sum_{i \in I_{k+1}}a_i\sigma_i\Big) &\text{(by Equation \ref{subsetpf})}\\
&= 0_{k-1}.&\text{(by Lemma \ref{boundofbound})}\\
\end{align*}

Therefore, by definition of a kernel, we have that $\sum_{i \in I_k}b_i\sigma_i$ is an element of $Z_k(X)$, the kernel of $\partial_k$.\\

Since this $k$-chain was chosen arbitrarily from $B_k(X)$, it follows that $B_k(X)\subset Z_k(X)$.\\
\end{proof}
\begin{nrem}
Since we have thus far proven that $B_k(X)$ is itself both a group and a subset of $Z_k(X)$, it follows that $B_k(X)$ is a subgroup of $Z_k(X)$.
\end{nrem}

\section{Simplicial Homology} \label{simplicialhomologysection}

In the previous section, we added a notion of orientation to the simplicial complex introduced in Section \ref{basicsimplex} and defined a group structure on the set of $k$-chains of a simplicial complex. In particular, the results of Section \ref{boundaryoperatorssection} on cycles and boundaries will be fundamental in the development of homology groups in the context of simplicial complexes, called \textit{simplicial homology}.\\

We continue, in this section, to suppress mention of the arbitrary field $\mathbb{F}$ in our notation outside of formal definitions wherever possible. Furthermore, we continue to assume that such an $\mathbb{F}$ is fixed.\\

\tpoint{Homology groups\\ \label{homologygroupssection}}

In this subsection, we fix a simplicial complex $X$ and again denote the set of $k$-chains, $k$-cycles, and $k$-boundaries of $X$ by $C_k$, $Z_k$ and $B_k$, respectively. Furthermore, we use a standard notation to denote operations on a set: where $g$ is an element of a group $G$ with group operation $+$, and where $H$ is a subset of $G$, we define
\begin{equation*}
g+H = \{g+h \mid h\in H\}.\\
\end{equation*}\\

For brevity, we refer the reader to Appendix Section \ref{appendixnormal} for the definition of a normal subgroup and proofs for the related results presented below.\\

We have shown previously in Theorem \ref{abelian} that $C_k$ is an Abelian group. Therefore, it follows immediately that every subgroup of $C_k$ is a normal, Abelian subgroup of $C_k$ (see Appendix Section \ref{appendixnormal}). Furthermore, we proved in Section \ref{cyclesbounds} that both $Z_k$ and $B_k$ are subgroups of $C_k$, and that $B_k$ is a subgroup of $Z_k$.\\

An immediate consequence of these results is a key prerequisite for the development of persistent homology, namely that $B_k$ is a normal subgroup of $Z_k$ (again by Appendix Section \ref{appendixnormal}, since $Z_k$ is Abelian). Consequently, we can now define the \textit{homology group} of a simplicial complex.\\

\begin{define} \label{homologygroup}
The \textit{$k^{\text{th}}$ homology group} $H_k(X,\mathbb{F})$ of a simplicial complex $X$ is the collection of unique \textit{cosets} of $B_k(X,\mathbb{F})$ in $Z_k(X,\mathbb{F})$---that is, the unique equivalence classes of form 
\begin{equation*}
z+B_k(X,\mathbb{F}),
\end{equation*}
where $z$ is a $k$-cycle in $Z_k(X,\mathbb{F})$.\\

Equivalently, we write
\begin{equation*}
H_k(X,\mathbb{F}) = \frac{Z_k(X,\mathbb{F})}{B_k(X,\mathbb{F})}
\end{equation*}\\
to mean that $H_k(X,\mathbb{F})$ is the quotient group of $Z_k(X,\mathbb{F})$ modulo $B_k(X,\mathbb{F})$.\\
\end{define}

\begin{notation}
Where context is clear, we will denote $H_k(X,\mathbb{F})$ by $H_k(X)$ or by $H_k$, for simplicity.\\
\end{notation}
\begin{theorem} \label{vectorspace}
The $k^\text{th}$ homology group $H_k(X,\mathbb{F})$ is a vector space over $\mathbb{F}$.\\
\end{theorem}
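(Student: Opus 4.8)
The plan is to upgrade the quotient-group structure already established for $H_k(X,\mathbb{F})$ into that of an $\mathbb{F}$-vector space. The idea is to first observe that $C_k(X,\mathbb{F})$ itself carries a natural scalar multiplication making it a vector space over $\mathbb{F}$, then to show that $Z_k$ and $B_k$ are subspaces (not merely subgroups), and finally to induce a well-defined scalar multiplication on the quotient $H_k=Z_k/B_k$. Since $Z_k$ is a subgroup of the Abelian group $C_k$ (Theorem \ref{abelian}), it is Abelian, and hence so is its quotient $H_k$ under the coset addition $(z_1+B_k)+(z_2+B_k)=(z_1\oplus z_2)+B_k$; what remains is to supply a compatible scalar multiplication and verify the scalar-multiplication axioms.

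First I would define scalar multiplication on $C_k$ by setting, for $c\in\mathbb{F}$, $c\cdot\sum_{i\in I_k}a_i\sigma_i=\sum_{i\in I_k}(ca_i)\sigma_i$, the products $ca_i$ being taken in $\mathbb{F}$. Together with the addition $\oplus$ of Definition \ref{chainop}, this makes $C_k$ an $\mathbb{F}$-vector space: the two distributive laws, the compatibility $(cd)\cdot\chi=c\cdot(d\cdot\chi)$, and the unital law $1\cdot\chi=\chi$ all follow directly from the field axioms of $\mathbb{F}$, applied coefficient by coefficient, in exactly the manner that the group axioms of Theorem \ref{abelian} followed from the additive structure of $\mathbb{F}$.

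Next I would verify that $Z_k$ and $B_k$ are subspaces of $C_k$. Each is already closed under $\oplus$ and contains $0_k$ by the results of Section \ref{cyclesbounds}, so only closure under scalar multiplication must be checked. The key observation is that the boundary operator respects scalar multiplication: directly from Definition \ref{boundaryop}, $\partial_k(c\cdot\chi)=c\cdot\partial_k(\chi)$ for every $k$-chain $\chi$. Consequently, for a cycle $z\in Z_k$ we obtain $\partial_k(c\cdot z)=c\cdot\partial_k(z)=c\cdot 0_{k-1}=0_{k-1}$, so $c\cdot z\in Z_k$; and for a boundary $b=\partial_{k+1}(a)\in B_k$ we obtain $c\cdot b=\partial_{k+1}(c\cdot a)\in B_k$. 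Thus $Z_k$ and $B_k$ are subspaces, and in particular $B_k$ is a subspace of $Z_k$.

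Finally I would define scalar multiplication on $H_k$ by $c\cdot(z+B_k)=(c\cdot z)+B_k$ and confirm the remaining axioms. I expect the main obstacle to be the well-definedness of this operation, which is the only step demanding genuine care. Suppose $z+B_k=z'+B_k$; then $z=z'\oplus b$ for some $b\in B_k$, so by distributivity $c\cdot z=(c\cdot z')\oplus(c\cdot b)$, and since $B_k$ is closed under scalar multiplication we have $c\cdot b\in B_k$, whence $(c\cdot z)+B_k=(c\cdot z')+B_k$. This is precisely the point at which the subspace---rather than merely subgroup---property of $B_k$ is indispensable. Once well-definedness is secured, the distributive, compatibility, and unital axioms for $H_k$ descend at once from the corresponding identities already established in $C_k$, completing the verification that $H_k(X,\mathbb{F})$ is a vector space over $\mathbb{F}$.
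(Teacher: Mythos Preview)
Your proposal is correct and follows the same route the paper sketches: the paper omits the proof entirely, merely stating in a remark that the axioms follow from $C_k(X,\mathbb{F})$ being Abelian and $\mathbb{F}$ being a field, and relegating the definitions of coset addition and scalar multiplication (identical to yours) to Appendix Section~\ref{vectorops}. Your argument is in fact more complete than the paper's, since you explicitly verify that $\partial_k$ is $\mathbb{F}$-linear, that $Z_k$ and $B_k$ are subspaces, and that the induced scalar multiplication on cosets is well defined---steps the paper passes over in silence.
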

\begin{nrem}
For brevity, a proof of this claim is omitted from this paper, though the result follows readily once an appropriate vector addition and $\mathbb{F}$-scalar multiplication on $H_k(X,\mathbb{F})$ is defined. In fact, all points of the vector space criteria follow immediately from the status of $C_k(X,\mathbb{F})$ as an Abelian group and $\mathbb{F}$ as a field.\\

The definitions of the above-mentioned vector operations are fairly intuitive and not widely used in this paper, so we refer the reader to Appendix Section \ref{vectorops} for full details.\\
\end{nrem}

Intuitively, the elements of $H_k$ describe the different ``kinds'' of cycles present in a simplicial complex without regard to the complex's boundary elements. This general intuitive understanding of a homology group is formalised in the above definition with the distinct sets of the form $z+B_k$, called cosets. For a fixed $k$-cycle $z_0$, the coset $z_0+B_k$ contains all $k$-cycles of the simplicial complex that differ from $z_0$ only by $k$-boundaries. Thus, if another $k$-cycle $y_0$ differs from $z_0$ by only $k$-boundaries, then $y_0$ is also a member of the coset $z_0+B_k$.\\

We again emphasize that the elements of $H_k$ are the \textit{distinct} equivalence classes of $k$-cycles in the given simplicial complex. Put more rigorously, two $k$-cycles $z_0$ and $y_0$ are in the same coset, or class, if $z_0\oplus(-y_0)$ is a $k$-boundary (where $-y_0$ is the inverse of $y_0$ in $C_k$, as defined in the proof of Proposition \ref{abelian}). In other words, the difference between $z_0$ and $y_0$ is composed only of $k$-boundaries.\\

\tpoint{Homology group example\\} \label{egsection}

To illustrate the concepts developed up to this point, we present a simple yet informative example where we explicitly calculate two homology groups of a given simplicial complex \cite{homoeg}. In this subsection, we use standard notation to represent the kernel and image of a function $f$, namely, $\ker f$ and $\Ima f$, respectively. We also employ the intuitive vector addition and scalar multiplication operations defined for the $k^\text{th}$ homology group, as presented in Appendix Section \ref{vectorops}. Lastly, we use familiar notation from linear algebra to denote the space spanned by a set of chains, as set out below.\\

\begin{notation}
Let $\sigma_i$ be $k$-chains and $a_i$ elements of some fixed field $\mathbb{F}$, for $i=1,2,\dots,n$. Denote by $\spn_\mathbb{F}\{\sigma_1,\sigma_2,\dots,\sigma_n\}$ the set of all $k$-chains of the form $a_1\cdot\sigma_1\oplus a_2\cdot\sigma_2\oplus \dots\oplus a_n\sigma_n$. We refer to $\sigma_1,\sigma_2,\dots,\sigma_n$ as \textit{generators} of $\spn_\mathbb{F}\{\sigma_1,\sigma_2,\dots,\sigma_n\}$.\\
\end{notation}

Considered below is the the simplicial complex $X$, as presented in \textbf{Figure \ref{eg}}.
\begin{figure}[h]
  \centering
    \includegraphics[scale=1]{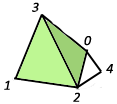}
  \caption{\footnotesize{A simplicial complex with vertices 0,1, 2, 3, and 4 as shown. \label{eg}}}
\end{figure}

We begin by listing the $k$-chain groups of $X$ by taking the span of all $k$-simplices in $X$, for $k=0,1,2$.
\begin{align}
C_0(X) &= \spn\{[0],[1],[2],[3],[4]\}\\
C_1(X) &= \spn\{[0,1],[0,2],[0,3],[0,4],[1,2],[1,3],[2,3],[2,4]\}\\
C_2(X) &= \spn\{[0,1,2],[0,1,3],[0,2,3],[1,2,3]\}
\end{align}\\

Let us first consider the dimension 0 homology group $H_0(X)$. By Definition \ref{homologygroup} of a homology group and by Section \ref{cyclesbounds}, we know that\\
\begin{equation} \label{egH0}
H_0(X) = \frac{Z_0(X)}{B_0(X)} = \frac{\ker \partial_0}{\Ima \partial_1}\\
\end{equation}\\

We will first calculate $\ker\partial_0$. Observe that, for every $0$-simplex [x], we have $\partial_0[x]=0$. Therefore, every 0-simplex is mapped to 0 under the boundary map, and so the set of all 0-simplices is in the kernel of $\partial_0$. In other words,
\begin{equation*}
C_0(X)\subset\ker\partial_0.\\
\end{equation*}
The kernel of $\partial_k$ is necessarily a subset of $C_0(X)$ by the definition of a kernel, so the reverse inclusion also holds. Therefore, we conclude that
\begin{equation} \label{z0}
\ker\partial_0=C_0(X).\\
\end{equation}\\

Let us now move on to determine $\Ima\partial_1$. We can calculate the generators of $\Ima\partial_1$ as image of the generators of $C_0(X)$. In other words,

\begin{align*}
B_0(X) = \Ima\partial_1 &= \partial_1\spn\{[0,1],[0,2],[0,3],[0,4],[1,2],[1,3],[2,3],[2,4]\}\\
&= \spn\{\partial_1[0,1],\, \partial_1[0,2],\, \partial_1[0,3],\, \partial_1[0,4],\, \partial_1[1,2],\, \partial_1[1,3],\, \partial_1[2,3],\, \partial_1[2,4]\}\\
&= \spn\{[1]-[0],\, [2]-[0],\, [3]-[0],\, [4]-[0],\, [2]-[1],\, [3]-[1],\, [3]-[2],\, [4]-[2]\}. \numberthis \label{b0}
\end{align*}\\

Therefore, by substituting Equations \ref{z0} and \ref{b0} into Equation \ref{egH0}, we obtain\\
\begin{equation*}
H_0(X) = \frac{\spn\{[0],[1],[2],[3],[4]\}}{\spn\{[1]-[0],\, [2]-[0],\, [3]-[0],\, [4]-[0],\, [2]-[1],\, [3]-[1],\, [3]-[2],\, [4]-[2]\}}.
\end{equation*}\\

Recall that any two $k$-chains $z_0$ and $y_0$ are considered equivalent in the $k^\text{th}$ homology group if their difference $z_0\oplus(-y_0)$ is a $k$-boundary. In this particular example, one can show that the difference of any two elements of $Z_0$ is a $0$-boundary.\\

For example, consider the 0-cycles $[3]$ and $[4]$, and the $0$-boundaries $[3]-[2]$ and $[4]-[2]$. Note that all of these are generators in the previous equation. Observe that
\begin{align*}
\big([3]-[2]\big) \oplus -\big([4]-[2]\big) &= \big([3]-[2]\big) \oplus \big(-[4]+[2]\big)\\
&= [3] + (1-1)[2] - [4]\\
&= [3] - [4]
\end{align*}
Note that $\big([3]-[2]\big) \oplus -\big([4]-[2]\big)$ is a $0$-boundary since $B_0(X)$ is a group and is closed under 0-chain addition. Therefore, we see that [3] and [4] are equivalent in $H_0(X)$, as the difference between these two cycles is an element of the boundary group.\\

As stated above, this same result holds true in general for every generator of the 0-cycles $Z_0(X)$. From this, it follows that every 0-cycle of $X$ is equivalent to every other 0-cycle of $X$. In other words, $[0]$, $[1]$, $[2]$, $[3]$, and $[4]$ are all elements of the same coset, namely, $[0]+B_0(X)$. Of course, since these cycles are equivalent, we may also represent this coset as $[x]+B_0(X)$, where $x$ is any of 0, 1, 2, 3, or 4. Finally, since $[0]$ is not a 0-boundary itself, we note that this coset is non-trivial---that is, $[0]+B_0(X)$ is not $B_0(X)$.\\

As a result, there is exactly one non-trivial element of the $0^\text{th}$ homology group $H_0(X)$, namely $[0]+B_0(X)$. Pictorially, this result corresponds to the fact that $X$, as shown in \textbf{Figure \ref{eg}}, is composed of exactly one connected component.\\

The $1^\text{st}$ homology group $H_1(X)$ can be calculated in a similar way. Once again, by the definition of a homology group and by previous results, we have that\\
\begin{equation} \label{egH1}
H_1(X) = \frac{Z_1(X)}{B_1(X)} = \frac{\ker\partial_1}{\Ima\partial_2}
\end{equation}\\

Though the work required is somewhat tedious without any additional techniques or methods, one can calculate $\ker\partial_1$ by direct computation to show that
\begin{align*}
\ker\partial_1 = \spn\{&[0,1]+[0,3]-[1,3],\,[0,2]+[2,3]-[0,3],\,[1,2]+[2,3]-[1,3],\,[0,1]+[1,2]-[0,2],\\&[0,2]+[2,4]-[0,4]\} \numberthis \label{z1}
\end{align*}\\

Notably less strenuous is the calculation for $B_1(X)$:
\begin{align*}
B_1(X) &= \Ima\partial_2 = \partial_2\spn\{[0,1,2],[0,1,3],[0,2,3],[1,2,3]\}\numberthis \label{b1}\\
&= \spn\{[0,1]+[1,2]-[0,2],\,[0,1]+[0,3]-[1,3],\,[0,2]+[2,3]-[0,3],\,[1,2]+[2,3]-[1,3]\}.
\end{align*}\\

Now substituting Equations \ref{z1} and \ref{b1} into Equation \ref{egH1}, we obtain, for $H_1(X)$,\\
\begin{align*}
\frac{\spn\{[0,1]+[0,3]-[1,3],[0,2]+[2,3]-[0,3],[1,2]+[2,3]-[1,3],[0,1]+[1,2]-[0,2],[0,2]+[2,4]-[0,4]\}}{\spn\{[0,1]+[1,2]-[0,2],\,[0,1]+[0,3]-[1,3],\,[0,2]+[2,3]-[0,3],\,[1,2]+[2,3]-[1,3]\}}
\end{align*}\\

Observe that the first four generators of $Z_1(X)$ in the numerator are also generators of $B_1(X)$ in the denominator: therefore, any combination of these four elements will necessarily be a boundary! On the other hand, one can see that the fifth generator of $\ker\partial_1$, namely, $[0,2]+[2,4]-[0,4]$ is independent of the boundary elements---this is intuitively clear since none of the boundary generators concern the point labelled as 4.\\

We conclude that $H_1(X)$ has two elements: besides the trivial class $B_1(X)$, we also have the non-trivial class $\big([0,2]+[2,4]-[0,4]\big)+B_1(X)$.\\

Recall from Section \ref{boundaryoperatorssection} that 1-cycles can be visualised as loops, and consider the above result in the context of \textbf{Figure \ref{eg}}. The trivial coset of $H_1(X)$ can be thought of as the class of all loops on \textbf{Figure \ref{eg}} that can be shrunk down to a single point: the loops of this class are exactly those loops that do not make use of the ``arm'' formed by the vertices labelled 2, 4, and 0. On the other hand, the non-trivial coset $\big([0,2]+[2,4]-[0,4]\big)+B_1(X)$ corresponds exactly to those loops that use this extra ``arm'' and hence cannot be reduced to a single point. Refer to \textbf{Figure \ref{egloop}} for a visual example of both cases.
\begin{figure}[h]
  \centering
    \includegraphics[scale=1]{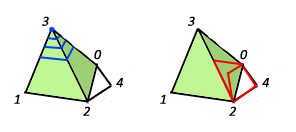}
  \caption{\footnotesize{Loops on the simplicial complex of \textbf{Figure \ref{eg}}. The blue loop on the left can be shrunk to a single point, as shown. On the other hand, the red loop on the right cannot be similarly reduced. These loops are examples of elements in the trivial and non-trivial cosets of $H_1(X)$, respectively.}}\label{egloop}
\end{figure}

In short, the result that $H_1(X)$ has exactly one non-trivial element corresponds to the fact that $X$ has exactly one 1-dimensional ``hole''---namely, the loop formed by the vertices labeled 2, 0, and 4.\\

\tpoint{Betti numbers\\ \label{bettinumbers}}

As a final result of this section, we develop a numerical summary of any homology group, called the \textit{Betti number}. The existence of this numerical descriptor stems from the result of Lemma \ref{vectorspace} that $H_k(X,\mathbb{F})$ is itself a vector space over the field $\mathbb{F}$. Consequently, the homology group $H_k(X,\mathbb{F})$ has a well-defined \textit{dimension}---that is, intuitively, the number of non-trivial elements that can be used to generate the homology group.\\

\begin{define}
The \textit{$k^\text{th}$ Betti number} of a simplicial complex $X$, denoted $\beta_k(X)$, is the dimension of the $k^\text{th}$ homology group $H_k(X,\mathbb{F})$ as a vector space over fixed field $\mathbb{F}$. We write
\begin{equation*}
\beta_k(X) = \dim H_k(X,\mathbb{F}),
\end{equation*}
and note in particular that the Betti number $\beta_k(X)$ is independent of the choice $\mathbb{F}$.\\
\end{define}

The Betti numbers of a given simplicial complex $X$ provide an easily-interpretable description of the topology of $X$. As demonstrated by the example in Section \ref{egsection}, the dimension of the $k^\text{th}$-homology group---or equivalently, the number of non-trivial generators---reveals how many holes of dimension $k$ are present in $X$. It is crucial to note that each non-trivial element of a homology group corresponds to a topological feature of $X$, as explained below.\\

In this sense, $\beta_0(X)$ can be interpreted as the number of connected components of $X$, and $\beta_1(X)$ as the number of loops of $X$---or equivalently, the number of 2-dimensional regions enclosed by $X$. Furthermore, $\beta_2$ is the number of \textit{voids}, or enclosed 3-dimensional regions, of $X$.\\

The Betti number can be defined similarly outside the context of simplicial homology. For example, the $k^\textit{th}$ Betti number of a compact manifold $M$, denoted $\beta_k(M)$, is the dimension of the $k^\textit{th}$ homology group of $M$. In other words, $\beta_k(M)$ can be interpreted as the number of connected components, loops, voids, and so on, of the manifold $M$. For intuition, we give a couple examples below of the first three Betti numbers for the sphere and torus.\\

Consider the standard sphere $S$ in three-dimensional Euclidean space. Observe that, since $S$ has a single connected component, $\beta_0(S)=1$. As all loops on $S$ are trivial---that is, since all loops can be shrunk to a single point---we have $\beta_1(S)=0$. Furthermore, since $S$ encloses a single three-dimensional region, we have $\beta_2(S)=1$.\\

As another example, consider the standard torus $T$ in three-dimensional Euclidean space. The torus is composed of a single connected component, so $\beta_0(T)=1$. Also, $T$ has two non-trivial classes of loops, namely, loops around the central ``hole'' of the torus and loops around ``tube'' of the torus: we then have $\beta_1(T)=2$. See \textbf{Figure \ref{torusloops}} for a visualisation of these two classes of loops. Finally, since $T$ encloses one three-dimensional space inside its ``tube'', we have $\beta_2(T)=1$.\\

\begin{figure}[h]
  \centering
    \includegraphics[scale=0.2]{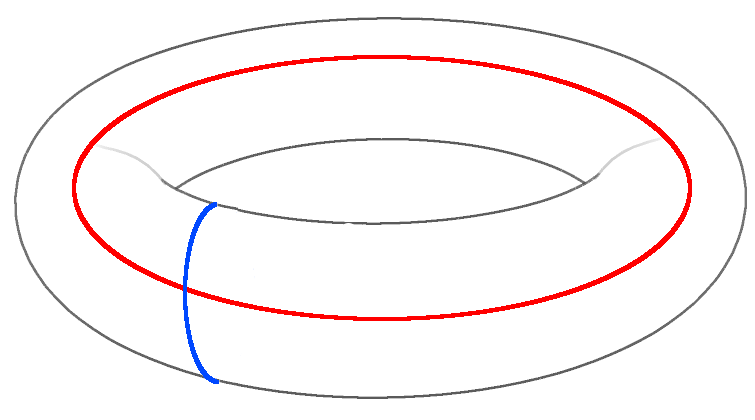}
  \caption{\footnotesize{Examples of the two non-trivial classes of loops on a torus, displayed in red and blue \cite{torusloops}.\label{torusloops}}}
\end{figure}

\tpoint{Persistence barcodes\label{persistencebarcodes}\\}

As a final theoretical topic, we will join simplicial homology with the filtrations of simplicial complexes introduced in Section \ref{pcconstructions}. We will continue to use the Vietoris-Rips complex construction of Definition \ref{VR-Complex} for its computational efficiency. Assumed throughout this subsection is the notation of Section \ref{pcconstructions} regarding simplices and Vietoris-Rips simplicial complexes.\\

We begin with a basic property of the a Vietoris-Rips filtration for a fixed point cloud $P$.\\

\begin{nprop}
For two fixed, real, and non-negative $\varepsilon$ and $\varepsilon'$ with $\varepsilon\leq\varepsilon'$, the Vietoris-Rips complex $V_\varepsilon(P)$ is nested inside $V_{\varepsilon'}(P)$. That is, every simplex of $V_\varepsilon(P)$ is also a simplex of $V_{\varepsilon'}(P)$.\\
\end{nprop}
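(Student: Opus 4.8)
The plan is to argue by induction on the dimension $k$ of a simplex, exploiting the recursive structure of the Vietoris--Rips construction in Definition \ref{VR-Complex}. The key observation driving the whole argument is that the only metric input to the construction enters through condition (2): a $1$-simplex $\sigma_{\{x,y\}}$ belongs to $V_\varepsilon(P)$ precisely when $d(x,y)\le\varepsilon$. Since $\varepsilon\le\varepsilon'$, any edge admitted at scale $\varepsilon$ is also admitted at scale $\varepsilon'$, and conditions (1) and (3) then propagate this monotonicity up to all higher-dimensional simplices.

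First I would dispatch the base cases. For $k=0$, condition (1) identifies the $0$-simplices of both $V_\varepsilon(P)$ and $V_{\varepsilon'}(P)$ with the points of $P$, so the two complexes share exactly the same vertex set and the inclusion is immediate. For $k=1$, I would take any $1$-simplex $\sigma_{\{x,y\}}$ of $V_\varepsilon(P)$; by condition (2) this means $d(x,y)\le\varepsilon$, whence $d(x,y)\le\varepsilon\le\varepsilon'$, and condition (2) applied at scale $\varepsilon'$ places $\sigma_{\{x,y\}}$ in $V_{\varepsilon'}(P)$.

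For the inductive step, fix $k\ge2$ and assume that every simplex of $V_\varepsilon(P)$ of dimension strictly less than $k$ also lies in $V_{\varepsilon'}(P)$. Let $\sigma_A$ be a $k$-simplex of $V_\varepsilon(P)$. By condition (3), all proper faces of $\sigma_A$ already lie in $V_\varepsilon(P)$; since each such face has dimension at most $k-1$, the inductive hypothesis places every face of $\sigma_A$ in $V_{\varepsilon'}(P)$. Applying condition (3) now at scale $\varepsilon'$---whose hypothesis, that all faces of $\sigma_A$ belong to $V_{\varepsilon'}(P)$, has just been verified---yields $\sigma_A\in V_{\varepsilon'}(P)$, completing the induction and hence the proof.

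The argument is essentially routine, so there is no serious obstacle; the one point demanding care is the interpretation of the self-referential-looking condition (3). I would read ``all faces of $\sigma_A$'' as ranging over the \emph{proper} faces (equivalently, the faces of strictly smaller dimension), since otherwise the condition would be vacuous. With that reading the induction on dimension is well-founded, the recursion bottoms out at the edges governed by condition (2), and the monotonicity $\varepsilon\le\varepsilon'$ does all the work.
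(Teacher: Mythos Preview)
Your proof is correct. It differs from the paper's in organisation rather than substance: the paper argues directly at the level of pairwise distances, observing that if $\sigma_S\in V_\varepsilon(P)$ then every pair $x,y\in S$ satisfies $d(x,y)\le\varepsilon\le\varepsilon'$, and concluding immediately that $\sigma_S\in V_{\varepsilon'}(P)$. You instead run an explicit induction on the dimension $k$, establishing the inclusion for vertices and edges and then using condition~(3) recursively to push it up to higher simplices.

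The two approaches are really the same argument unfolded to different depths. The paper's one-line proof tacitly relies on the equivalence between the recursive Definition~\ref{VR-Complex} and the clique characterisation ``$\sigma_S\in V_\varepsilon(P)$ iff all pairwise distances in $S$ are at most $\varepsilon$''; that equivalence itself requires the very induction you wrote out. So your version is longer but more self-contained, and your remark about reading ``all faces'' in condition~(3) as \emph{proper} faces is exactly the care needed to make the recursion well-founded. Either presentation is acceptable here; yours has the virtue of not leaving any implicit step to the reader.
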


\begin{proof}
The proof of this claim follows immediately from Definition \ref{VR-Complex} of a Vietoris-Rips complex.\\
Suppose $\sigma_S$ is a simplex of $V_\varepsilon(P)$ for some set of points $S$ in $P$. Then by definition, for every pair of points $x$ and $y$ in $S$, it follows that
\begin{equation*}
d(x,y)\leq\varepsilon.
\end{equation*}
By assumption, $\varepsilon\leq\varepsilon'$, and so also,
\begin{equation*}
d(x,y)\leq\varepsilon'.
\end{equation*}\\

Therefore, by definition of the Vietoris-Rips complex, it follows that $\sigma_S$ is a simplex of $V_{\varepsilon'}(P)$. We have then proven the desired result, namely that $V_\varepsilon(P)$ is a subset of $V_{\varepsilon'}(P)$.\\
\end{proof}

We can form a \textit{chain of nested simplicial complexes} using the Vietoris-Rips construction by varying $\varepsilon$. Indeed, given a sequence of increasing $\varepsilon_i$, where $\varepsilon_1<\varepsilon_2<\dots\varepsilon_n<\varepsilon_{n+1}<\dots$ we have\\
\begin{equation} \label{vrstream}
V_{\varepsilon_1}(P)\subset V_{\varepsilon_2}(P)\subset\dots\subset V_{\varepsilon_n}(P)\subset V_{\varepsilon_{n+1}}(P) \subset\dots,
\end{equation}\\

As a result, we can define maps between the homology groups of these complexes, namely\\

\begin{equation} \label{homologychain}
H_k\big(V_{\varepsilon_1}(P)\big)\xrightarrow{\varphi_1} H_k\big(V_{\varepsilon_2}(P)\big)\xrightarrow{\varphi_2} \dots \xrightarrow{\varphi_{n-1}} H_k\big(V_{\varepsilon_n}(P)\big)\xrightarrow{\varphi_n} H_k\big(V_{\varepsilon_{n+1}}(P)\big)\rightarrow\dots,
\end{equation}\\

where $k\geq2$. For the rest of this subsection, we will denote $H_k\big(V_{\varepsilon_n}(P)\big)$ by $H_k^{\varepsilon_n}$, and the map from $H_k^{\varepsilon_n}$ to $H_k^{\varepsilon_{n+1}}$ by $\varphi_n$, for ease of notation. While the specific maps $\varphi_n$ are not significant for our purposes, the effect of these maps on the above homology groups is certainly of note. Consider the map $\varphi_n$, and suppose that $z_0$ and $y_0$ are distinct elements of $H_k^{\varepsilon_{n}}$. Note that elements of $H_k^{\varepsilon_n}$ are always mapped forward to $H_k^{\varepsilon_{n+1}}$ in Equation \ref{homologychain} by $\varphi_n$. However, $\varphi_n$ is not necessarily surjective---that is, there may be some elements of $H_k^{\varepsilon_{n+1}}$ that are not the image of any element of $H_k^{\varepsilon_{n}}$ under $\varphi_n$. Such elements are said to be \textit{born at time} $\varepsilon_n$.\\

Observe that $\varphi_n$ will map $z_0$ and $y_0$ to elements of $H_k^{\varepsilon_{n+1}}$ that may either be distinct or identical. Where $\varphi_n(z_0)$ and $\varphi_n(y_0)$ are distinct, we say that $z_0$ has \textit{persisted} from $\varepsilon_n$ to $\varepsilon_{n+1}$, and similarly so for $y_0$. On the other hand, if $\varphi_n(z_0)$ is equal to $\varphi_n(y_0)$, we say that one of $z_0$ or $y_0$ has \textit{died}. By convention that will soon become apparent, we choose the element of the pair that was born last to be the one to die at $\varepsilon_{n+1}$. Thus, if $z_0$ was born before $y_0$, we say that $y_0$ \textit{dies at time} $\varepsilon_{n+1}$.\\

At this point, we can develop a simple yet intuitive visual representation of the birth and death times of all homological elements appearing in Equation \ref{homologychain}.\\

\begin{define} \label{barcodedef}
Fix some $k\geq0$, and let $Y_k$ be the set of all $k^{\text{th}}$ homology group elements of Equation \ref{homologychain} at the time they are born. Consider, for each element $y$ in $Y_k$, the real, half-open interval $[b_y,d_y)$, where $b_y$ and $d_y$ are the birth and death times of $y$, respectively. Define the \textit{$k$-barcode} corresponding to the filtration given in Equation \ref{vrstream} by the collection of intervals
\begin{equation*}
\{[b_y,d_y)\mid y\in Y_k\}
\end{equation*}\\
\end{define}

Although beyond the scope of this paper, it can be shown that the homological features appearing in a nested filtration of complexes---for example, the object given in Equation \ref{homologychain}---is ismorphic to the collection of the filtration's corresponding $k$-barcodes for all $k\geq0$. In other words, the birth and death times of all features in a filtration uniquely determines the filtration's corresponding barcode, and vice-versa. We can now, easily and without algebraic notation, represent the homology of a filtration of simplicial complexes as an intuitive series of intervals!


\section{Components of the Proposed Clustering Analysis \label{sectioncomponents}\\}

Now that the theory of persistent homology and its application to simplicial complex filtrations has been developed in the previous sections, we begin the second major portion of this paper. Our focus now shifts to the application of persistent homology to real-world data---in particular, we examine a sizable dataset from the field of linguistics.\\

This section serves as an introduction to the major components relevant to the data analysis that was carried out. In the following subsections, we briefly discuss the \textit{Edinburgh Associative Thesaurus} dataset, the \textit{modularity index} for assessing the quality of a clustering method, the \textit{Markov Clustering} algorithm, and the particular persistent homology techniques applied in this study.\\

\tpoint{Edinburgh Associative Thesaurus\\ \label{eatsubsection}}

The Edinburgh Associative Thesaurus (EAT) is a large dataset containing information on mental associations made between words of the English language \cite{eat}. As discussed in Section \ref{linguisticapplications}, an individual will associate various ideas, concepts, and notions with a given word. As expected, these associations will vary from person to person based on culture, personal experience, worldview, or any number of factors that shape how an individual thinks. For example, one person may associate the word GERMAN most strongly with the word FRENCH, whereas another may associate GERMAN with KRAUT, as shown in \textbf{Figure \ref{french}}. The set of word associations form, for each person, a network between words of the English language, called a \textit{word association network}.\\

\begin{figure}[h]
  \centering
    \includegraphics[scale=0.15]{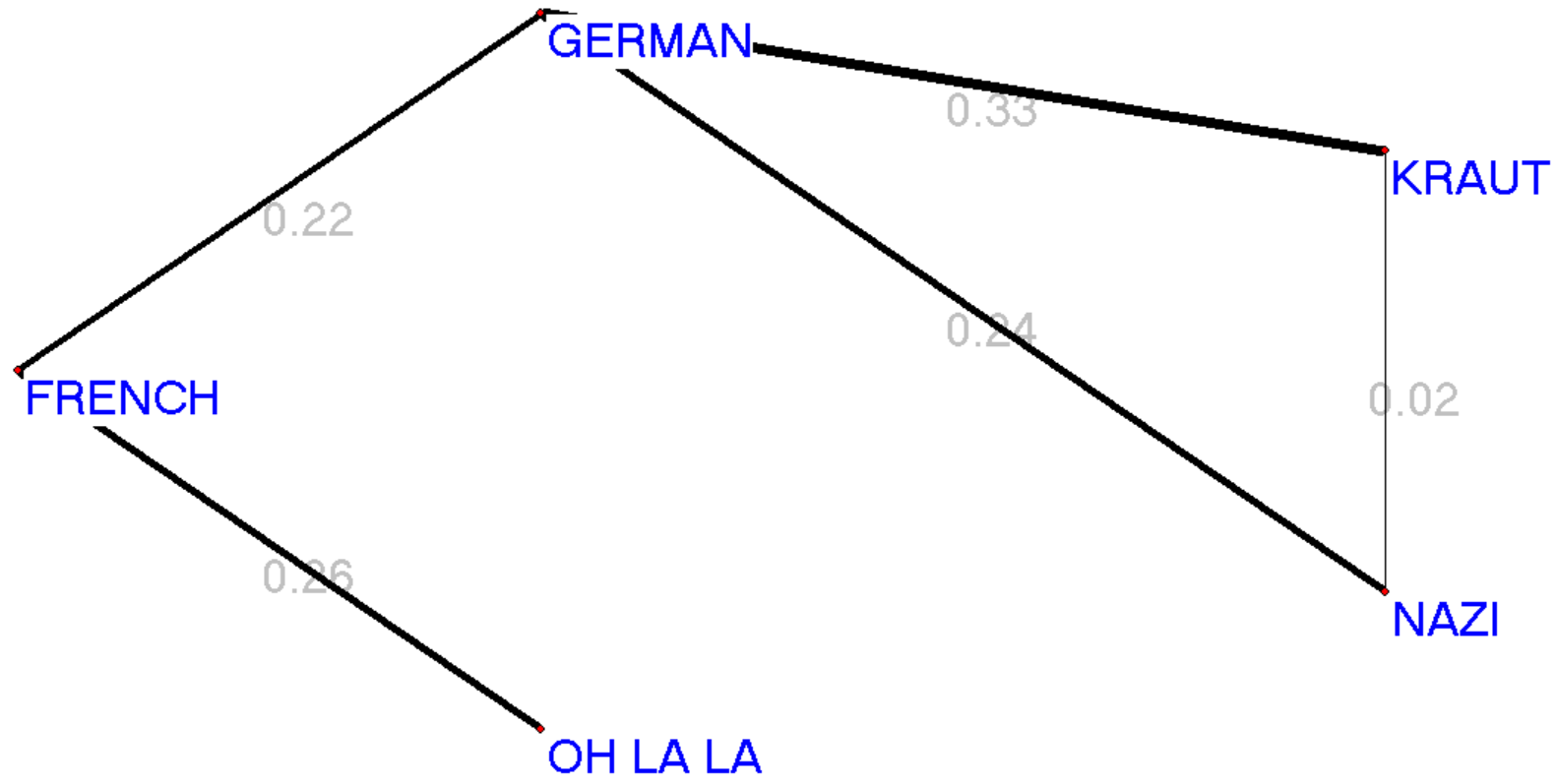}
  \caption{\footnotesize{A small portion of a word association network, highlighting relationships made with the words FRENCH and GERMAN. In this diagram, edge thickness is proportional to the association strength, also labeled numerically on each edge.}\label{french}}
\end{figure}

Differences and similarities in word association networks between persons are of particular interest to researchers, again as discussed in Section \ref{linguisticapplications}. The EAT, for example, has previously been used to find and classify semantic and psychological links between words, as well as to maximize advertising efficacy by making use of common associations.\\

The EAT database was constructed using 8,400 \textit{stimulus words}. Each of these stimulus words was presented on paper to approximately 100 different subjects. Each subject was prompted to write down, as quickly as possible, the first word that came to mind after viewing the stimulus.\\

The data comprising the EAT contains all stimulus words, all responses, and the number of times that a response was given for each stimulus. Based on this data, we assigned a numerical index to each word and calculated the proportion of occurrence for each ordered pair of words. For example, if 25 out of 100 people presented with the word CAT responded with DOG, then the proportion of occurrence of DOG after seeing CAT is 0.25. As the proportion of occurrence is not necessarily symmetric, we took the \textit{association strength} between two words to be the maximum proportion of occurrence between them. Continuing the previous example, if the proportion of occurrence of CAT after seeing DOG is 0.40, we take the strength of the association between CAT and DOG to be 0.4---that is, the maximum of 0.25 and 0.4.\\

In total, the data used in our analysis included 305,134 associations between 23,181 unique words.\\

\tpoint{Modularity index\label{modularitysection}\\}

As one of this project's main objectives is to compare the clustering abilities of persistent homology to other methods, it is essential that we have a scale to measure the performance of each technique. To this end, we make use of the \textit{modularity index} $Q$ for weighted graphs \cite{modularity}---that is, graphs for which a weight has been assigned to each edge. As the formal development of modularity is outside the scope of this paper, we instead provide only a definition and intuitive explanation of this measure of clustering performance.\\

The modularity index $Q$ is a numerical value between -1 and 1 that describes how well a given graph has been partitioned into clusters. Loosely-speaking, modularity measures the difference in density between the connections within clusters and the connections between clusters. A set of clusters that more effectively separates the vertices of a graph will have a higher modularity: we thus seek to maximize $Q$ to obtain the best possible clustering of a graph.\\

In the below definition of modularity, we assume the following notation. Let the vertices of a graph be labelled according to some index set $V$, and let $\omega_{i,j}$ represent the weight of the edge between the vertices labelled $i$ and $j$. Note that $\omega_{i,j}$ is taken to be 0 if no such edge exists. Let $M$ be the sum of all edge weights in the graph, and $k_i$ the sum of the weights of all edges attached to the vertex labelled $i$. Lastly, define $\delta(i,j)$ to be the function that equals 1 when the vertices labelled $i$ and $j$ have been assigned to the same cluster, and 0 otherwise.\\

\begin{define}
Using the notation above, given a weighted graph and a partitioning of its vertices into clusters, define the modularity index $Q$ of this clustering to be

\begin{equation*}
Q=\frac{1}{M}\sum_{i,j\in V}\big[\omega_{i,j}-\frac{k_ik_j}{M}\big]\delta(i,j)
\end{equation*}
\end{define}

Essentially, the sum above only considers pairs of vertices $i$ and $j$ in the same cluster. Intuitively, we then see that $Q$ is increased by $\omega_{i,j}$, the weight of edges within a cluster, and decreased by $\frac{k_ik_j}{M}$, a measure of the complexity of the graph around the vertices $i$ or $j$. In the context of machine learning, this definition of modularity is essentially the fundamental problem of balancing the interpretability of a model with its complexity.\\

\tpoint{Markov Clustering algorithm\\}

\textit{Markov Clustering} (MCL) is an algorithm developed by van Dongen for separating a graph or network into clusters---that is, partitioning the vertices of a graph into non-overlapping subsets containing vertices that are similar in some way \cite{mcl}. In general, a cluster of a graph is characterised by a higher proportion of edges within the cluster then outside the cluster. As discussed in section \ref{linguisticapplications}, the problem of clustering is relevant to the fields of image analysis, machine learning, general pattern recognition in computing science, and bioinformatics.\\

Most clustering methods and algorithms, however, become computationally infeasible for increasingly large datasets---that is, these algorithms are not scalable. The MCL algorithm, on the other hand, is presented as a computationally efficient and scalable means of extracting clusters from even very large networks. Although MCL has been used widely in the field of bioinformatics, the algorithm has been applied previously to linguistics in the creation of a dictionary of French synonyms (\cite{ling1}, Section 4.1) and a study of word clustering in the Japanese language (\cite{ling1}, Section 4.2).\\

Based on its precedent use in linguistic analysis, we chose to use MCL in this study as a performance benchmark for persistent homology. Although a detailed exposition of MCL is outside the scope of this paper, we present here a brief, intuitive description of this algorithm.\\

The scalability of MCL to large graphs stems from the algorithm's use of random walk simulations on the graph being considered. More specifically, the idea underlying MCL is that, by randomly traveling along the edges of a graph, one is more likely at any point to stay within a single cluster than one is to exit the cluster. Based on this idea, MCL alternates between periods of simulating long and short random walks. These periods are respectively referred to as the \textit{expansion} and \textit{inflation stages} of the algorithm. Longer random walks are more likely to travel between clusters, thus allowing potential clusters to expand and include more vertices. Shorter random walks, on the other hand, are more likely to stay within a cluster---this serves to remove weak elements of a potential cluster and strengthen the connection between vertices that are strongly similar.

In our application of MCL, the words of the EAT are interpreted as vertices of a graph. Furthermore, for any pair of words, the weight of the edge connecting them is taken to be their association strength.\\

MCL is dependent on a choice of \textit{inflation parameter} that determines the ``strength'' of the inflation stage of the algorithm. In our study, we performed MCL on the EAT database for a wide range of inflation parameter values. We then calculated the modularity $Q$ of the clustering created by each iteration of the algorithm. The results of this test can be found in Section \ref{mclresults}.\\

\tpoint{Clustering with complexes and persistent homology\label{phtests}\\}

The main focus of this study is persistent homology's ability to find clusters and higher-dimensional topological features such as loops and voids in large datasets. In this subsection, we describe specifically how persistent homology was used to extract clusters and other topological features from the EAT. We will assume the notation used in Section \ref{pcconstructions} pertaining to complex constructions.\\

In order to apply the Vietoris-Rips construction to the EAT, we generalise Definition \ref{VR-Complex}. Since our dataset $P$---that is, the words in the EAT---cannot be placed meaningfully in Euclidean space, we instead take the metric $d(x,y)$ to be one minus the association strength between the two words $x$ and $y$. In other words, $d$ becomes a measure of dissimilarity. Note that this transformation of association strength to dissimilarity is necessary to ensure that pairs of words with high association strength have a short ``distance'' between them. After this modification, though we do not have a true metric and cannot properly visualise the complexes created, we are still able to construct a Vietoris-Rips filtration.\\

To illustrate, consider the example previously given in Section \ref{eatsubsection} using the words CAT and DOG. We will denote, as above, the words in the EAT dataset by $P$. We previously supposed the association strength between CAT and DOG to be 0.4: therefore, the ``distance'', or dissimilarity, between these words is
\begin{equation*}
d(\text{CAT},\text{DOG})=1-0.4=0.6.
\end{equation*}
Therefore, the 1-simplex built from the ``points'' CAT and DOG will be present in the complex $V_{0.7}(P)$, but not in $V_{0.5}(P)$.\\

Using the R-TDA package for topological data analysis, we constructed a filtration of Vietoris-Rips complexes on the EAT and determined the $k$-barcodes, for $k=0,1,2$, corresponding to this filtration. Recall from Section \ref{persistencebarcodes} that the barcode is equivalent to the persistent homology of the filtration: both contain information about the birth and death times of each topological feature that appears.\\

We first considered maximizing cluster modularity over the set of all Vietoris-Rips complexes created. In subsequent sections, we refer to this method as \textit{simple clustering by similarity}. Ultimately, this is a naive method that clusters together pairs of words with similarity above a specified threshold parameter---in other words, we simply take connected components as they appear in a single Vietoris-Rips complex. Note that this method only uses properties of the Vietoris-Rips complex, and not of persistent homology. We performed such clustering over a large number of threshold values in order to maximize the cluster modularity $Q$.\\

In contrast, the next method, referred to as \textit{clustering by persistence}, does make use of the persistent homology of the constructed Vietoris-Rips filtration. Here we consider the \textit{persistence}---that is, the difference between the birth and death time---of each 0-dimensional homological feature. We fix a persistence threshold parameter and cluster two words $x$ and $y$ together if and only if the 1-simplex connecting $x$ and $y$ has a lifetime greater than the specified threshold.\\

We note that, at the time of this study, the R-TDA package did not have the functionalities necessary to perform the above analysis. Although we developed code to extract clusters and topological features of arbitrary dimension from a homology of the Vietoris-Rips filtration, such code will be made available in a future publication and is not presented here.\\


\section{Clustering Results\\ \label{resultssection}}

This section presents results for each of the clustering tests introduced in Section \ref{sectioncomponents} and compares these methods using the modularity index as defined in Section \ref{modularitysection}. Furthermore, we discuss a modification of the clustering by persistence method to increase persistent homology's clustering effectiveness relative to MCL.\\

\tpoint{Markov Clustering results \label{mclresults}\\}

We applied the Markov Clustering algorithm to partition the 23,181 words of the EAT dataset into groups of closely-associated words. Due to the dependence of MCL on a choice of inflation parameter, we iterated the algorithm 241 times using a range of inflation parameter values between 1.20 and 6.00. For each iteration, we calculated the modularity of the clustering produced and looked to maximize this quantity over the inflation parameter values tested. \textbf{Figure \ref{mcl_Q}} presents a plot of the modularity value calculated for each MCL iteration against the inflation parameter used.\\

\begin{figure}[h]
  \centering
    \includegraphics[scale=0.6]{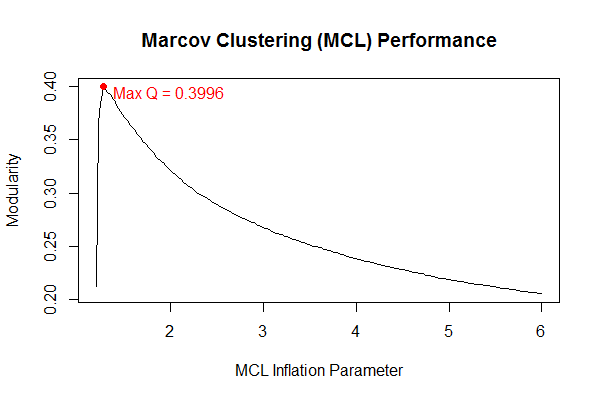}
  \caption{\footnotesize{A plot showing the relationship between MCL's inflation parameter value and the quality of the clusters produced by the algorithm for the EAT dataset.}\label{mcl_Q}}
\end{figure}

As shown in \textbf{Figure \ref{mcl_Q}}, the maximum modularity value attained by MCL is 0.3996, occurring at an inflation parameter value of 1.28. This particular iteration produced 319 distinct word clusters.\\

\tpoint{Simple clustering by similarity results\\}

In this test, we used individual Vietoris-Rips complexes to cluster the words of the EAT dataset, as set out in Section \ref{phtests}. Similar to the MCL algorithm, the Vietoris-Rips complex construction is dependent on a parameter $\varepsilon$, adhering the notation of \ref{pcconstructions}. As such, we looked to maximize modularity over the 31 values of $\varepsilon$ chosen. A plot of the results is presented in \textbf{Figure \ref{ph_time_Q}}.\\

\begin{figure}[h]
  \centering
    \includegraphics[scale=0.6]{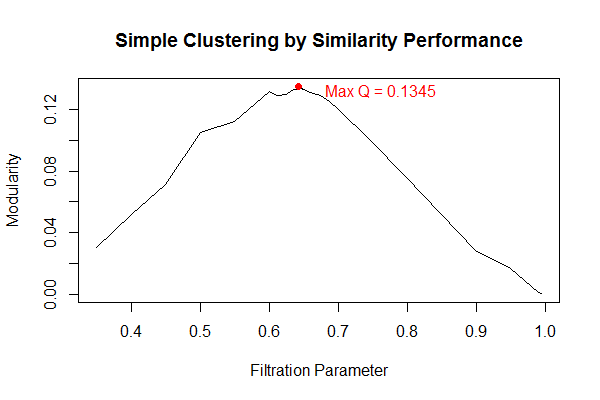}
  \caption{\footnotesize{A plot showing the relationship between Vietoris-Rips parameter $\varepsilon$, called the \textit{filtration parameter}, and modularity index in the simple clustering by similarity method.}\label{ph_time_Q}}
\end{figure}

Modularity was at a maximum of 0.1345 for a value 0.6421 of the filtration parameter $\varepsilon$. Furthermore, the number of clusters produced at optimal modularity was found to be 21,523. Observe that the optimal modularity value produced by this method is considerably lower than that of the MCL algorithm.\\

\tpoint{Clustering by persistence results\label{phresults}\\}

In our final test, we used the persistent homology of a filtration of Vietoris-Rips complexes to cluster the words of the EAT. As described in Section \ref{phtests}, our method of clustering was once again dependent on a \textit{persistence threshold} parameter. Recall that, for this method, we cluster two words together if and only if the 0-simplex connecting the two words has a lifetime greater than the chosen persistence threshold. As in previous subsections, we maximized modularity over 33 threshold values: a plot of the results is presented in \textbf{Figure \ref{ph_pers_Q}}.\\

\begin{figure}[h]
  \centering
    \includegraphics[scale=0.6]{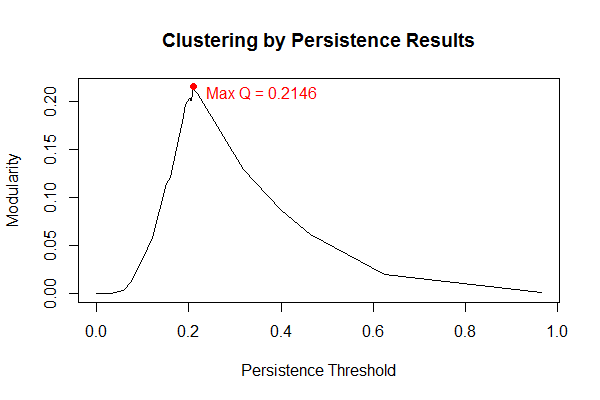}
  \caption{\footnotesize{A plot showing the relationship between persistence threshold and modularity index in the clustering by persistence method.}\label{ph_pers_Q}}
\end{figure}

Observe that the maximum modularity value attained by this method is 0.2146, occurring at a persistence threshold of 0.209. At this point, 18,882 distinct clusters were present. Although this maximum modularity value is higher than that of the simple clustering by similarity method, it is still lower than the maximum modularity attained by the MCL algorithm.\\

\tpoint{Discussion\\}

In this subsection, we compare the results of each of the previous methods, and focus specifically on simple clustering by similarity and clustering by persistence. Discussed is a modification that may potentially increase persistent homology's efficacy as a clustering method, relative to MCL.\\

As noted in Section \ref{phtests}, the simple clustering by similarity method is naive in that it only considers the connected components of a single simplicial complex in a Vietoris-Rips filtration. This results in a clustering that groups any given vertex according to solely the strength of its direct connections to other vertices. Furthermore, the modularity values for this method suffer from a problem that largely motivates persistent homology in the first place---noise in the data. Short-lived features, or in this setting, connected components that are created but quickly join with a larger component in the Vietoris-Rips filtration, are counted the same as components with a long lifetime. The failure of the simple clustering method to address this issue fragments the generated clusters into smaller pieces and yields a lower overall modularity.\\

For this reason, clustering by persistence performs notably better than simple clustering by similarity. The former considers the overall significance of each connected component in the entire filtration and removes those components with lifetimes deemed insignificant. However, the method still suffers from a similar weakness in that it doesn't look beyond a vertex's direct connection to its neighbours, ultimately fragmenting potential clusters into small pieces. This effect is clearly seen in the high number of clusters generated by these two methods, as shown in previous subsections. MCL seems to do better in this regard by considering the probability that a random walk on the vertices of a graph stays within a given cluster, rather than considering just each individual connection between vertices.\\

For these reasons, a modification to persistent homology that would likely improve the method's clustering ability for the EAT dataset is to construct simplicial complexes based on some measure of \textit{vertex density} rather than just the similarity between two words. One such (simplistic) density measure could, for example, consider the vertices adjacent to the immediate neighbours of a given vertex. Loosely-speaking, the proportion of those vertices that are still immediately adjacent to the given vertex would function as a measure of how dense a dataset is at the given vertex. The use of such a method would require a different formulation of persistent homology, so we do not discuss it further in this paper.\\


\section{Linguistic Interpretation of Results\\} \label{interpretationsection}

In this final section, we present some specific results of our analysis of the EAT dataset with persistent homology. In particular, we give examples of clusters, loops, and voids, as well as potential interpretations of each of these features in a linguistic context.\\

To extract the clusters displayed in this section, we disregarded all edges with a lifetime less than approximately 0.209---the value of the persistence threshold maximizing the modularity index, as found in Section \ref{phresults}. During our analysis, we noted that higher-dimensional features such as loops and voids have significantly shorter lifespans than do clusters: as such, we selected the loops and voids presented here from those found to be most persistent---that is, those having the longest lifetime in the Vietoris-Rips filtration. Features were then extracted from the R-TDA results using the code described in Section \ref{phtests}.\\

Each of the images displayed in this section were created using the Pajek program for large network analysis and visualisation \cite{pajek}. We make particular note that the placement of each word in a diagram has no effect on the interpretation of the network structures formed. Rather, we used the Kamada-Kawai and Fruchterman-Reingold graph-drawing algorithms \cite{graphdraw} to arrange the words in an visually-pleasing and interpretable way. In each diagram, the thickness of an edge connecting two words is proportional to the association strength of the word pair, also indicated numerically.\\

Similar images of the features found in the EAT are presented in Appendix Section \ref{appendiximages}.

\tpoint{Clusters\\}

Clusters are the simplest and most interpretable feature in the EAT data. Our results agree with common intuition in that words tend to associate closely with others sharing a similar underlying idea. However, we observed it was not uncommon for the words of a given cluster to be connected to a few central, yet thematically unrelated words. This sort of structure appears in \textbf{Figure \ref{cluster_pupil}} with the words EYE and SCHOOL, and in \textbf{Figure \ref{cluster_death}} with DEATH.\\

While most clusters seem to be based largely on a common theme, a number of clusters span multiple ideas and concepts. Of particular interest are these intra-cluster connections between distinct ideas. The cluster in \textbf{Figure \ref{cluster_pupil}} is a representative example, where two clearly distinct themes are present: eyesight and school. Connecting these two ideas is PUPIL, a word which may be taken as both a synonym to the word student or an anatomical part of the eye. We observe that, in this case, the formation of a cluster linking these two concepts is due to PUPIL's multiple meanings in the English language.\\

\begin{figure}[h]
  \centering
    \includegraphics[scale=0.2]{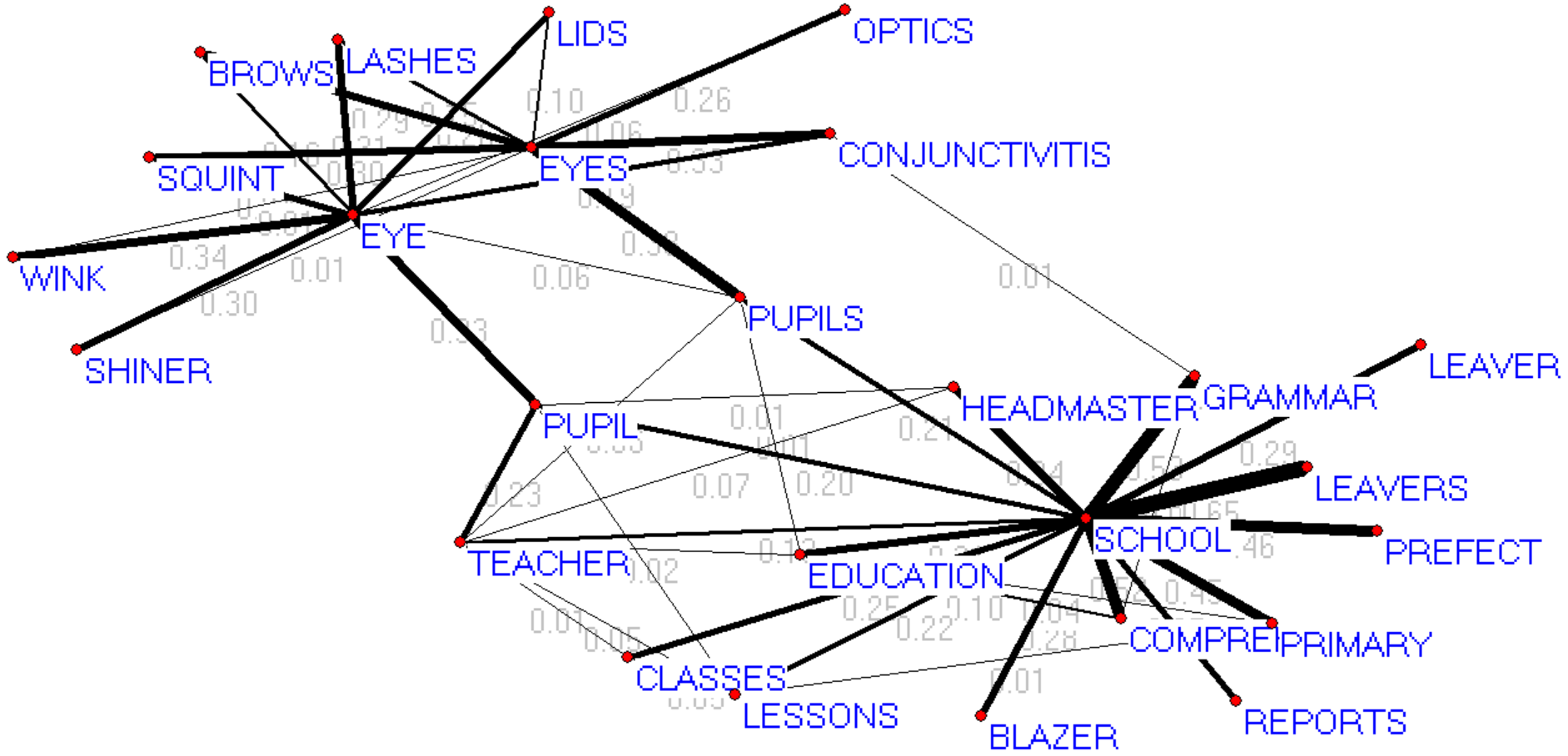}
  \caption{\footnotesize{A word cluster displaying two separate themes: eyesight and school. The connection between them is mediated by PUPIL, due to this word's multiple meanings in the English language.}\label{cluster_pupil}}
\end{figure}

\textbf{Figure \ref{cluster_death}} displays the same phenomenon. In contrast to \textbf{Figure \ref{cluster_pupil}}, however, the connection between the two themes of thought and death through CONCEPTION is not due entirely to the structure of the English language. First, CONCEPTION is related to IDEA, perhaps through a mental association with the word CONCEPT or the way in which ideas are ``born'' in an individual's mind. Second, CONCEPTION may be taken in a biological sense as a synonym to BIRTH.\\

\begin{figure}[h]
  \centering
    \includegraphics[scale=0.2]{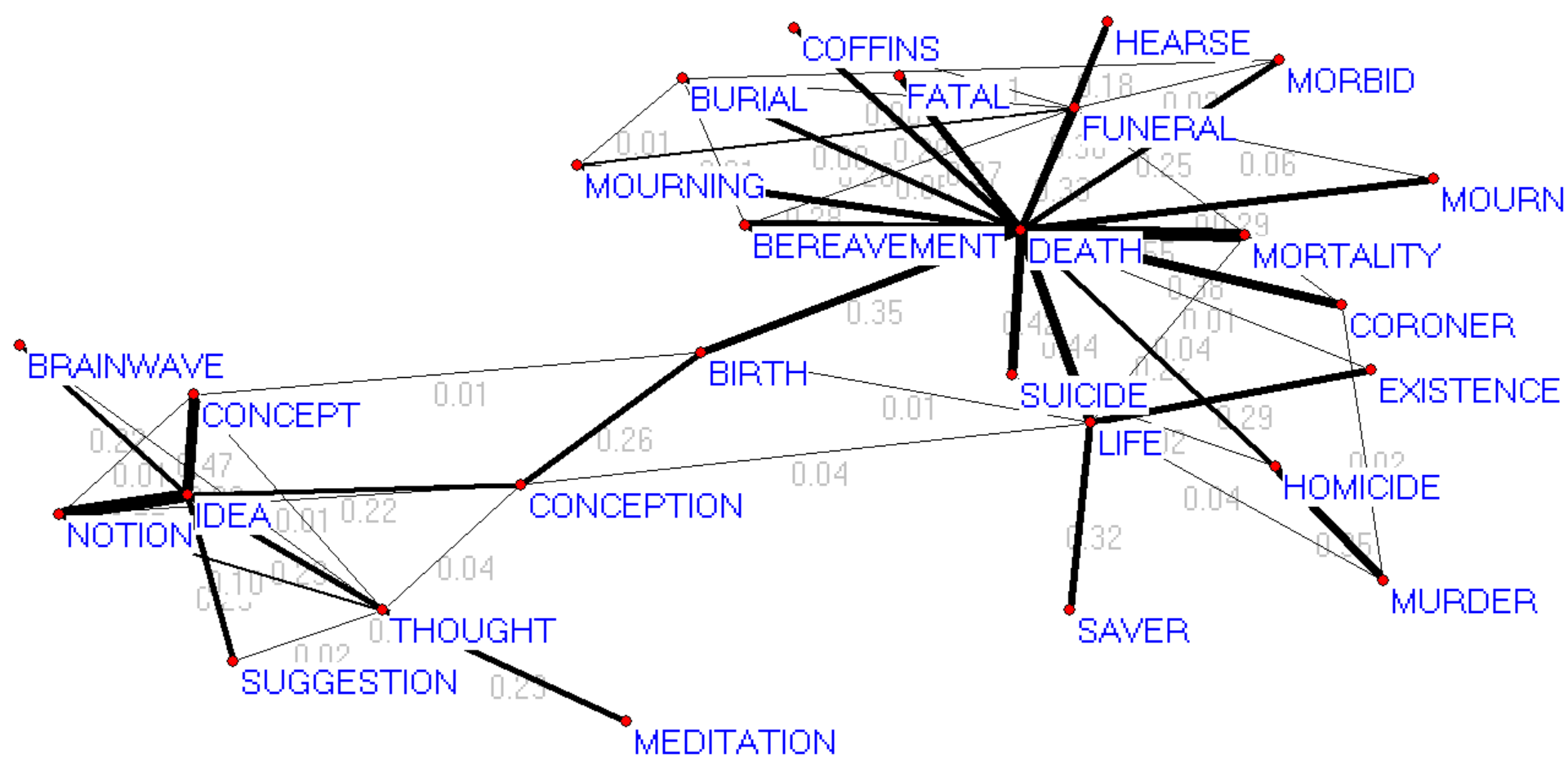}
  \caption{\footnotesize{A cluster whose members group around DEATH and IDEA. The word CONCEPTION connects the themes present in the graph.}\label{cluster_death}}
\end{figure}

\tpoint{Loops\\}

The example loop presented in this subsection may be easily seen by following word associations of higher strength---that is, the thicker edges---in the diagram. Thinner edges represent weak associations that were disregarded in our clustering method, as described at the beginning of this section.\\

In a linguistic context, a loop in the EAT data can be interpreted as a chain of closely-associated words linking one word back to itself. For example, in \textbf{Figure \ref{loop_cat}}, a certain ``train of thought'' connecting the words of the loop can be seen. We make particular note that, as in clusters, the words of a loop need not share a common theme.\\

\begin{figure}[h]
  \centering
    \includegraphics[scale=0.25]{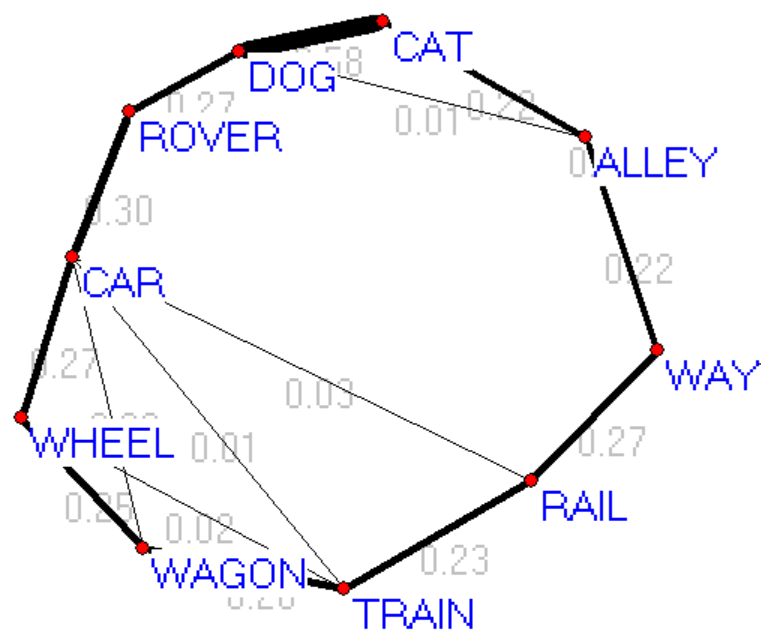}
  \caption{\footnotesize{A small loop found in the EAT data. The sequence of words composing the loop form a ``train of thought'' between them. For example, we see that CAR and DOG are connected by ROVER, both a kind of vehicle and a common dog name.}\label{loop_cat}}
\end{figure}

\tpoint {Voids\\}

Lastly, we present an example of a void found in the EAT data, and compare these features to loops. Voids are topologically equivalent to spheres: in general, they enclose some three-dimensional space. In contrast, loops, as shown in the previous section, enclose two-dimensional spaces.\\

Both loops and voids are composed of links between strongly-associated words. A loop, however, is restricted in the sense that, at any particular word, there are only two directions in which the loop can proceed. A void, on the other hand, due to its higher-dimensional nature, is not restricted in this way. This ``freedom'' is illustrated in \textbf{Figure \ref{void_math}}, particularly in how EQUATION is linked to all of ALGEBRA, MATHS, MATHEMATICS, and SUM---all of the other words of the void, in this case. For this reason, the words of a void are more closely associated with each other than the words of a loop.\\

We found that the words of a void generally share exactly one common theme. In \textbf{Figure \ref{void_math}}, this theme is clearly mathematics. In contrast, the loop of \textbf{Figure \ref{loop_cat}} contained words related to both animals and transportation.\\

\begin{figure}[h]
  \centering
    \includegraphics[scale=0.2]{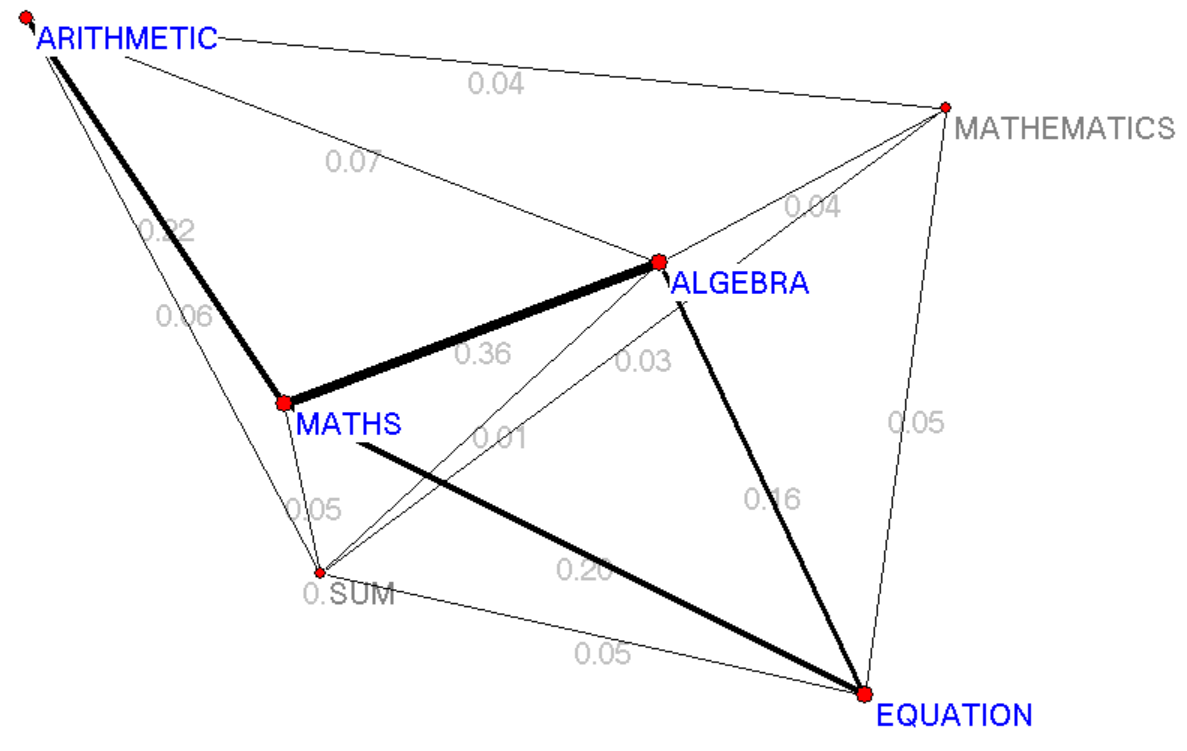}
  \caption{\footnotesize{A visual representation of a void found in the EAT data. Note in particular the high level of connectivity between words of the void and their central theme of mathematics. As the words here were visualised in three-dimensional space, we indicate vertices further in the background with grey labels.}\label{void_math}}
\end{figure}

\newpage
\section{Appendix: Algebraic Proofs and Definitions \label{appendixproofs}}

This appendix contains supplemental materials and proofs deemed too technical for the main body of Sections \ref{boundscyclessection} and \ref{simplicialhomologysection} of this paper. We assume the notations introduced in those sections.

\tpoint{Proof of Proposition \ref{abelian}\\} \label{appendixabelian}

\textbf{Proposition.} \textit{Fix a non-negative integer $k$, a simplicial complex $X$, and a field $\mathbb{F}$. The set $C_k(X,\mathbb{F})$ of $k$-chains of $X$ over $\mathbb{F}$, together with simplex addition $\oplus$, forms an Abelian group.}\\

\begin{proof}
We will verify the standard group criteria and show that $\oplus$ is commutative. Let $\underset{i\in I_k}{\sum}a_i\sigma_i$ and $\underset{i\in I_k}{\sum}b_i\sigma_i$ be arbitrary elements of $C_k(X,\mathbb{F})$.\\

\textit{$\bullet$ Closure under $\oplus$:}\\

Observe that
\begin{equation*}
\underset{i\in I_k}{\sum}a_i\sigma_i \oplus \underset{i\in I_k}{\sum}b_i\sigma_i = \underset{i\in I_k}{\sum}(a_i+b_i)\sigma_i.
\end{equation*}
Since $\mathbb{F}$ is closed under $+$ as a field, it follows that $(a_i+b_i)$ is in $\mathbb{F}$ for all $i$ in $I_k$. Therefore, $\underset{i\in I_k}{\sum}(a_i+b_i)\sigma_i$ is an element of $C_k(X,\mathbb{F})$, proving that $C_k(X,\mathbb{F})$ is closed under $\oplus$.\\

\textit{$\bullet$ Commutativity of $\oplus$:}\\

Observe that
\begin{align*}
\underset{i\in I_k}{\sum}a_i\sigma_i \oplus \underset{i\in I_k}{\sum}b_i\sigma_i &= \underset{i\in I_k}{\sum}(a_i+b_i)\sigma_i &\text{(by definition of $\oplus$)}\\
&=\underset{i\in I_k}{\sum}(b_i+a_i)\sigma_i &\text{(by commutativity of $+$ in $\mathbb{F}$)} \\
&=\underset{i\in I_k}{\sum}b_i\sigma_i \oplus \underset{i\in I_k}{\sum}a_i\sigma_i &\text{(by definition of $\oplus$)}
\end{align*}
Therefore $\oplus$ is commutative in $C_k(X,\mathbb{F})$.\\

\textit{$\bullet$ Identity element:}\\

Let 0 denote the neutral additive element of $\mathbb{F}$ under $+$. We will prove that $\underset{i\in I_k}{\sum}0\sigma_i$ is additive neutral for $\oplus$ in $C_k(X,\mathbb{F})$. Observe that\\
\begin{align*}
\underset{i\in I_k}{\sum}a_i\sigma_i \oplus \underset{i\in I_k}{\sum}0\sigma_i = \underset{i\in I_k}{\sum}(a_i+0)\sigma_i = \underset{i\in I_k}{\sum}a_i\sigma_i
\end{align*}
since 0 is additive neutral in $\mathbb{F}$. Additionally, by the commutativity of $\oplus$ proven above,
\begin{equation*}
\underset{i\in I_k}{\sum}a_i\sigma_i \oplus \underset{i\in I_k}{\sum}0\sigma_i = \underset{i\in I_k}{\sum}0\sigma_i \oplus \underset{i\in I_k}{\sum}a_i\sigma_i = \underset{i\in I_k}{\sum}a_i\sigma_i.
\end{equation*}

Therefore, $\underset{i\in I_k}{\sum}0\sigma_i$ is additive neutral for $\oplus$ in $C_k(X,\mathbb{F})$.\\

\textit{$\bullet$ Inverse elements:}\\

Given any $\underset{i\in I_k}{\sum}a_i\sigma_i$ in $C_k(X,\mathbb{F})$, consider $\underset{i\in I_k}{\sum}(-a_i)\sigma_i$, where $(-a_i)$ denotes the additive inverse of $a_i$ in $\mathbb{F}$. Observe that
\begin{align*}
\underset{i\in I_k}{\sum}a_i\sigma_i \oplus \underset{i\in I_k}{\sum}(-a_i)\sigma_i &= \underset{i\in I_k}{\sum}(a_i+-a_i)\sigma_i=\underset{i\in I_k}{\sum}0\sigma_i
\end{align*}
since $-a_i$ is the additive inverse of $a_i$ under $+$.\\

Therefore, $\underset{i\in I_k}{\sum}(-a_i)\sigma_i$ is the additive inverse of $\underset{i\in I_k}{\sum}a_i\sigma_i$ under $\oplus$.\\\\

Therefore, by the above criteria, $C_k(X,\mathbb{F})$ with group operation $\oplus$ forms an Abelian group.\\
\end{proof}

\tpoint{Proof of Lemma \ref{boundofbound}} \label{appendixbound}\\

\textit{Lemma.} For any $k\geq0$, the image of a $k$-boundary under the dimension $k$ boundary operator is the trivial $(k-1)$-chain. Equivalently, for any integer $k\geq1$
\begin{equation*}
\partial_{k-1}\partial_{k}\Big(\sum_{i \in I_{k}}a_i\sigma_i\Big)=0_{k-2}.
\end{equation*}\\

\begin{proof}
Fix $k\geq1$ as supposed, and let $[x_0,x_1,\dots,x_k]$ be any oriented $k$-simplex. Observe that
\begin{align*}
\partial_{k-1}\partial_{k}[x_0,x_1,..,x_k] &= \partial_{k-1}\sum_{i=0}^k(-1)^i[x_0,..,\hat{x}_i,..,x_k]\\
&= \sum_{i=0}^k(-1)^i\partial_{k-1}[x_0,..,\hat{x}_i,..,x_k]\\
&= \sum_{i=0}^k\Big[\sum_{j=0}^{i-1}(-1)^j(-1)^i[x_0,..,\hat{x}_j,..,\hat{x}_i,..,x_k] + \sum_{j=i+1}^{k}(-1)^{j-1}(-1)^i[x_0,..,\hat{x}_i,..,\hat{x}_j,..x_k]\Big]\\
&= \sum_{i=0}^k\sum_{j=0}^{i-1}(-1)^j(-1)^i[x_0,..,\hat{x}_j,..,\hat{x}_i,..,x_k] + \sum_{i=0}^k\sum_{j=i+1}^{k}(-1)^{j-1}(-1)^i[x_0,..,\hat{x}_i,..,\hat{x}_j,..x_k]\\
&= \sum_{0\leq j<i\leq k}(-1)^j(-1)^i[x_0,..,\hat{x}_j,..,\hat{x}_i,..,x_k] - \sum_{0\leq i< j\leq k}(-1)^{j}(-1)^i[x_0,..,\hat{x}_i,..,\hat{x}_j,..x_k]\\
&= \sum_{0\leq j<i\leq k}(-1)^j(-1)^i[x_0,..,\hat{x}_j,..,\hat{x}_i,..,x_k] - \sum_{0\leq j< i\leq k}(-1)^{i}(-1)^j[x_0,..,\hat{x}_j,..,\hat{x}_i,..x_k]\\
&= 0_{k-2}.\\
\end{align*}

Therefore, for all $k\geq1$, we have $\partial_{k-1}\partial_{k}[x_0,x_1,..,x_k]=0$. Extending this result to $k$-chains, we have
\begin{equation*}
\partial_{k-1}\partial_{k}\Big(\sum_{i \in I_{k}}a_i\sigma_i\Big)=\sum_{i \in I_{k}}a_i\partial_{k-1}\partial_{k}\sigma_i= 0_{k-2},
\end{equation*}
Thus proving the desired result.\\
\end{proof}

\tpoint{Normal subgroups and related results \label{appendixnormal}\\}

In this subsection, we recall the definition of a \textit{normal subgroup} and further prove some basic results required in Section \ref{homologygroupssection}.\\

Assume throughout that $G$ is a group, where the group operation on elements $g_1$ and $g_2$ of $G$ is denoted by $g_1g_2$. Furthermore, let the inverse of $g_1$ under the group operation be denoted by $g_1^{-1}$. Lastly, we suppose that $H$ is a subgroup of $G$.\\

\textit{Definition.} We say that $H$ is a normal subgroup of $G$ if, for every element $g$ of $G$ and every element $h$ of $H$, the element $ghg^{-1}$ is in $H$. Equivalently, we say that $H$ is invariant under conjugation by $G$.\\

\textit{Lemma.} If $G$ is an Abelian group, then every subgroup $H$ of $G$ is a normal subgroup of $G$.\\

\begin{proof}
Suppose $g$ and $h$ are arbitrary elements of $G$ and $H$, and suppose that $G$ is an Abelian group. We will verify that $H$ is a normal subgroup of $G$ by appealing to the definition of a normal subgroup. Observe that
\begin{align*}
ghg^{-1}&=gg^{-1}h &\text{(since $G$ is Abelian)}\\
&=h\in H
\end{align*}
Therefore, $ghg^{-1}$ is in $H$, and it follows by definition that $H$ is a normal subgroup of $G$.
\end{proof}

\tpoint{Vector operations on homology groups \label{vectorops}}\\

In this section, we define a vector addition $\oplus_H$ and scalar multiplication $\cdot$ appropriate for the homology group $H_k(X,\mathbb{F})$ as a vector space.\\

\begin{define}
Let $\oplus_H$ be a binary operation on $H_k(X,\mathbb{F})$ defined via
\begin{align*}
\oplus_H: H_k(X,\mathbb{F})\times H_k(X,\mathbb{F})&\rightarrow H_k(X,\mathbb{F})\\
\big(z_0+B_k(X)\big) \oplus_H \big(y_0+B_k(X)\big) &= (z_0\oplus y_0)+B_k(X),
\end{align*}
where we recall that $\oplus$ is the additive operation of the Abelian group $C_k(X,\mathbb{F})$ presented in Definition \ref{chainop}.\\
\end{define}
\begin{notation}
We will suppress notation and write $\oplus_H$ as $\oplus$, as context will make clear whether we are dealing with the sum of homology group elements or of chains.\\
\end{notation}
\begin{nrem}
The above addition operation follows the standard group theoretic formulation for the addition of cosets.\\
\end{nrem}

\begin{define}
Let $\cdot$ be the $\mathbb{F}$-scalar multiplication defined via\\
\begin{align*}
\cdot: \mathbb{F}\times H_k(X,\mathbb{F})&\rightarrow H_k(X,\mathbb{F})\\
c\cdot\big(\sum_{i \in I_k}a_i\sigma_i + B_k(X)\big)&= \sum_{i \in I_k}(ca_i)\sigma_i+B_k(X),
\end{align*}
where $ca_i$ denotes the result of the multiplicative operation of $\mathbb{F}$ on $c$ and $a_i$.\\
\end{define}

\section{Appendix: Additional Images \label{appendiximages}}

\begin{figure}[h]
  \centering
    \includegraphics[scale=0.25]{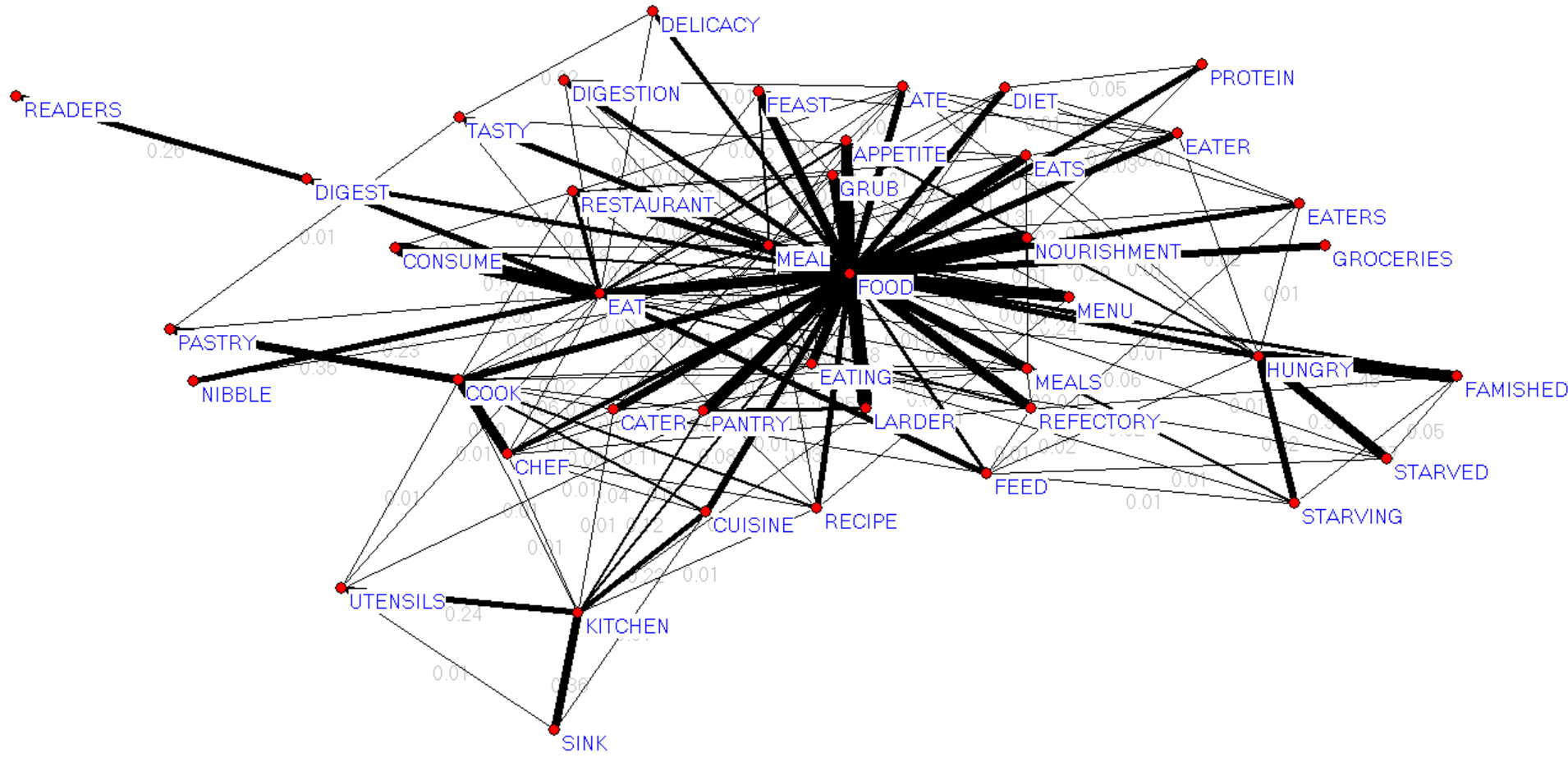}
  \caption{\footnotesize{Highly-centralised clustering around FOOD.}}
\end{figure}

\begin{figure}[h]
  \centering
    \includegraphics[scale=0.25]{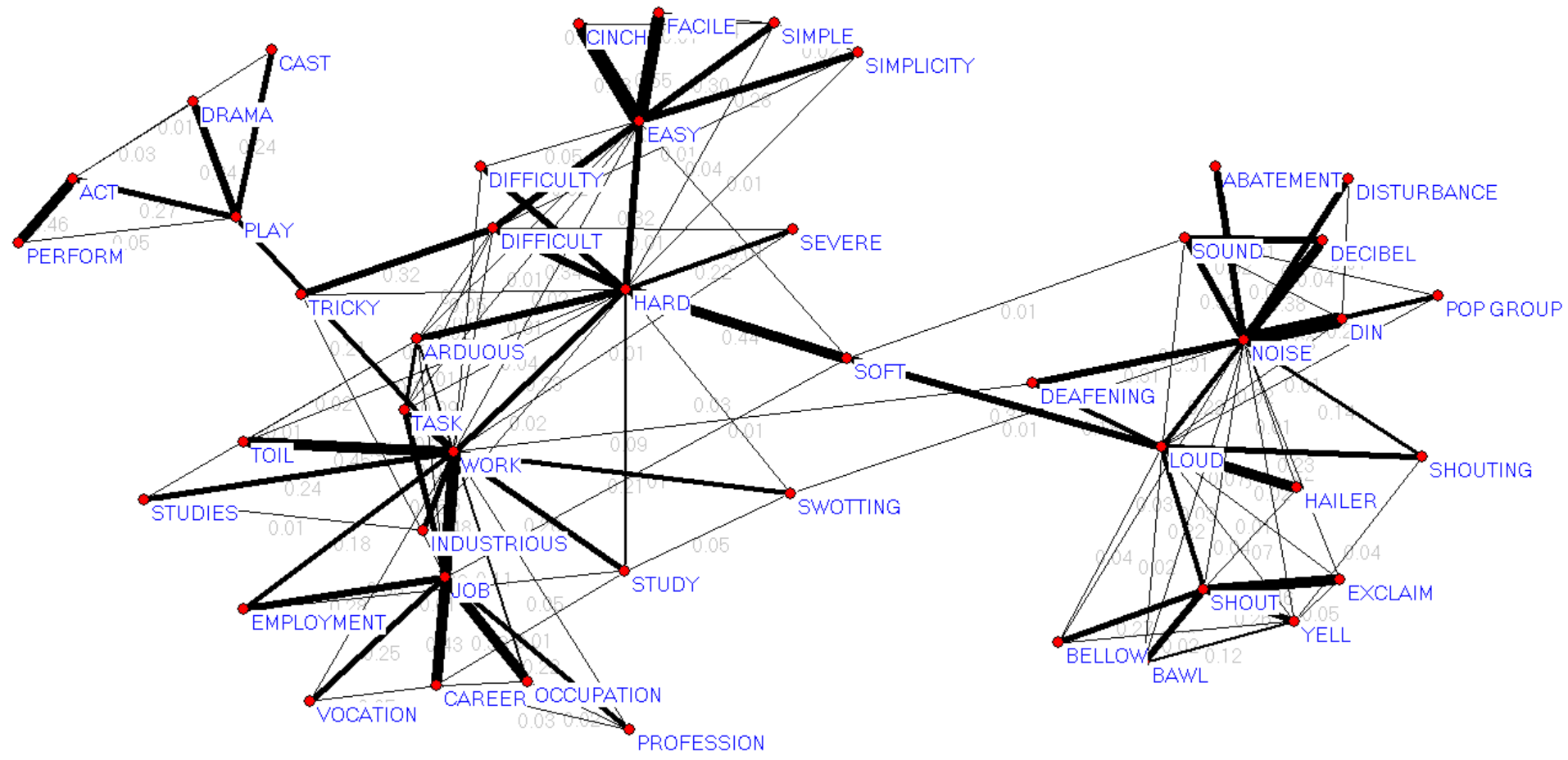}
  \caption{\footnotesize{A number of themes present in a single cluster.}}
\end{figure}

\begin{figure}[h]
  \centering
    \includegraphics[scale=0.25]{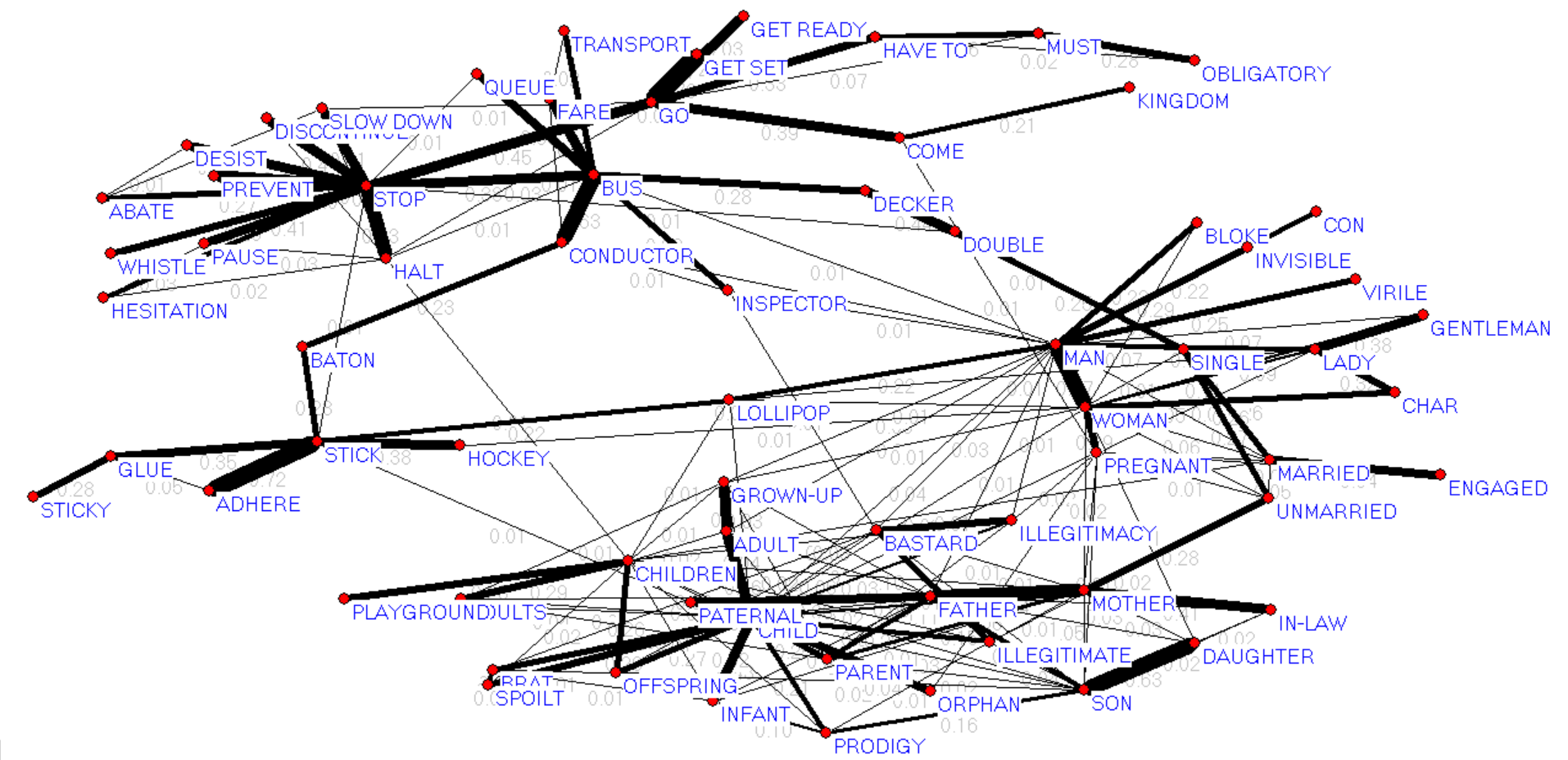}
  \caption{\footnotesize{An example of intricate structure within a cluster.}}
\end{figure}

\begin{figure}[h]
  \centering
    \includegraphics[scale=0.25]{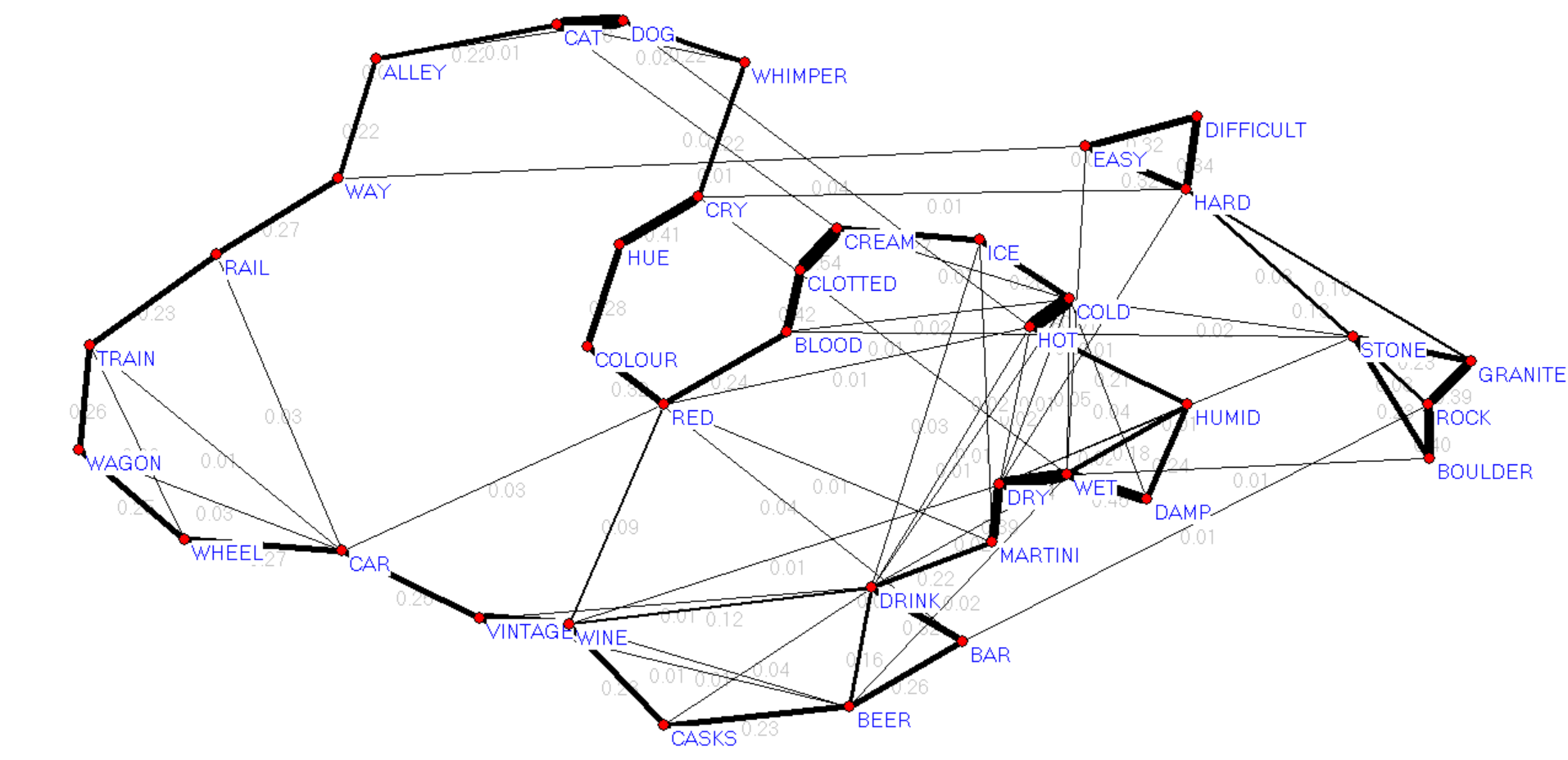}
  \caption{\footnotesize{A large loop with weak connections to other loops.}}
\end{figure}

\begin{figure}[h]
  \centering
    \includegraphics[scale=0.25]{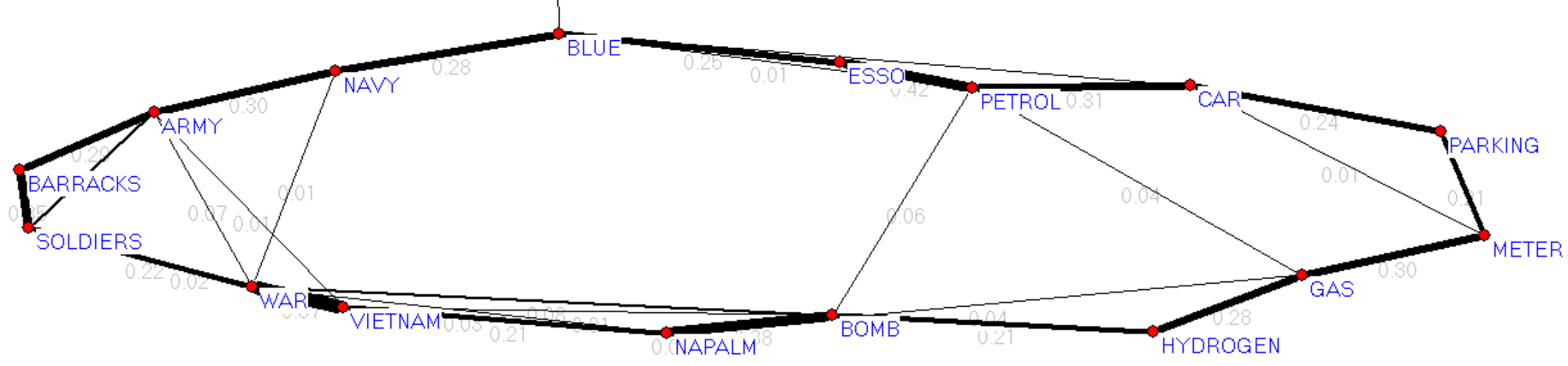}
  \caption{\footnotesize{Another large loop}}
\end{figure}

\begin{figure}[h]
  \centering
    \includegraphics[scale=0.2]{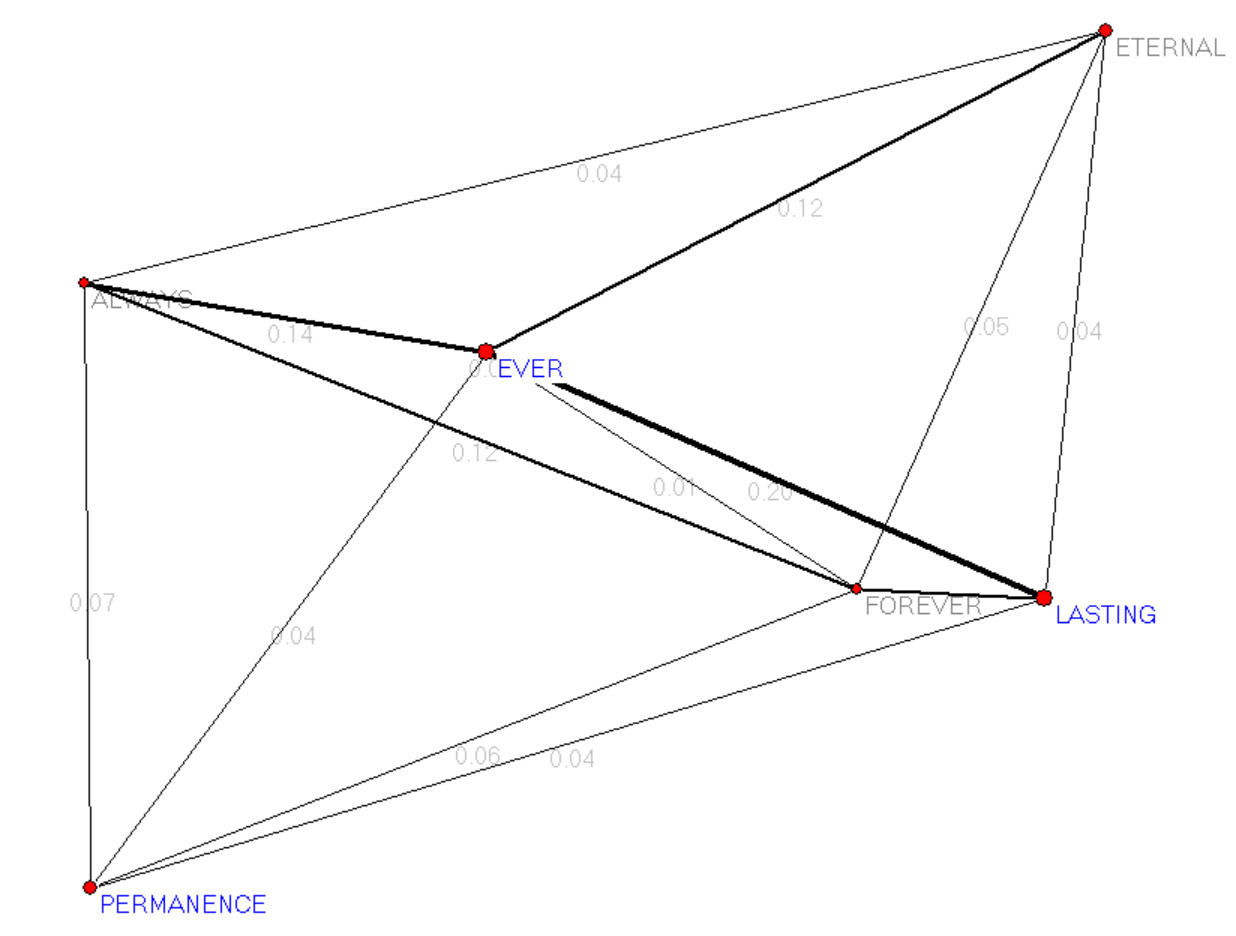}
  \caption{\footnotesize{The most persistent void found in the EAT dataset.}}
\end{figure}

\clearpage


\begin{bibsection}
\begin{biblist}
\bib{tophistory}{book}{
     TITLE = {History of topology},
    EDITOR = {James, I. M.},
 PUBLISHER = {North-Holland, Amsterdam},
      YEAR = {1999},
     PAGES = {x+1056},
      ISBN = {0-444-82375-1},
}
\bib{bio}{article}{
	Author = {Kovacev-Nikolic\ Violeta and Bubenik\, Peter\, and Nikolić\, Dragan\, and Heo\, Giseon},
	Journal = {Statistical Applications in Genetics \& Molecular Biology},
	Number = {1},
	Pages = {19 - 38},
	Title = {Using persistent homology and dynamical distances to analyze protein binding.},
	Volume = {15},
	Year = {2016}
}
\bib{med}{article}{
Author = {Mata, Gadea, and Morales, Miguel, and Romero, Ana, and Rubio, Julio},
ISSN = {0167-8655},
Journal = {Pattern Recognition Letters},
Pages = {55},
Title = {Zigzag persistent homology for processing neuronal images.},
Year = {2015},
}
\bib{image}{article}{
Author = {Shengxiang, Xia},
ISSN = {20081898},
Journal = {Journal of Nonlinear Sciences \& Applications (JNSA)},
Number = {1},
Pages = {126 - 138},
Title = {A topological analysis of high-contrast patches in natural images.},
Volume = {9},
Year = {2016},
}
\bib{topdata}{article}{
Author = {Carlsson, Gunnar},
ISSN = {02730979},
Journal = {Bulletin (New Series) of the American Mathematical Society},
Number = {2},
Pages = {255 - 308},
Title = {Topology and Data.},
Volume = {46},
Year = {2009},
}
\bib{sensor}{article}{
Author = {Gamble, Jennifer, and Chintakunta, Harish, and Krim, Hamid},
ISSN = {0165-1684},
Journal = {Signal Processing},
Pages = {1 - 18},
Title = {Coordinate-free quantification of coverage in dynamic sensor networks.},
Volume = {114},
URL = {http://login.ezproxy.library.ualberta.ca/login?url=http://search.ebscohost.com/login.aspx?direct=true&db=edselp&AN=S0165168415000791&site=eds-live&scope=site},
Year = {2015},
}
 \bib{eat}{misc}{
   title = {EAT: Edinburgh Associative Thesaurus},
   year = {2015},
   url = {http://www.eat.rl.ac.uk/},
 }
\bib{cluster.fig}{misc}{
Author={Chire}
Title={SLINK-Gaussian-data.svg, https://commons.wikimedia.org/w/index.php?curid=17087089}
url={https://commons.wikimedia.org/w/index.php?curid=17087089}
}
\bib{ling1}{article}{
Author = {Joyce, Terry, and Miyake, Maki},
ISSN = {9783540781585},
Journal = {Large-scale Knowledge Resources. Construction \& Application},
Pages = {116},
Title = {Capturing the Structures in Association Knowledge: Application of Network Analyses to Large-Scale Databases of Japanese Word Associations.},
Year = {2008},
}
 \bib{rtda}{manual}{
 title = {TDA: Statistical Tools for Topological Data Analysis},
 author = {Brittany T. Fasy and Jisu Kim and Fabrizio Lecci and Clement Maria and Vincent Rouvreau. The included GUDHI is authored by Clement Maria and Dionysus by Dmitriy Morozov and PHAT by Ulrich Bauer and Michael Kerber and Jan Reininghaus.},
 year = {2015},
 note = {R package version 1.4.1},
 url = {https://CRAN.R-project.org/package=TDA},
 }
  \bib{mcl}{misc}{
   title = {MCL - a cluster algorithm for graphs},
   year = {2015},
   url = {http://micans.org/mcl/},
 }
 
 \bib{ghrist}{article}{
    AUTHOR = {Ghrist, Robert},
     TITLE = {Barcodes: the persistent topology of data},
   JOURNAL = {Bull. Amer. Math. Soc. (N.S.)},
    VOLUME = {45},
      YEAR = {2008},
    NUMBER = {1},
     PAGES = {61--75},
      ISSN = {0273-0979},
       DOI = {10.1090/S0273-0979-07-01191-3},
}
\bib{thesis}{article}{
   author={Kovacev-Nikolic, Violeta},
   title={Persistent homology in analysis of point-cloud data},
   date={2012},
}
\bib{pajek}{misc}{
   title = {Program Package Pajek / PajekXXL},
   year = {2016},
   url = {http://mrvar.fdv.uni-lj.si/pajek/},
 }
 \bib{munkres}{book}{
Author = {Munkres, James R.},
ISBN = {0201045869},
Publisher = {Menlo Park, Calif. : Addison-Wesley, c1984.},
Title = {Elements of algebraic topology.},
Year = {1984},
}
\bib{goodman}{book}{
Author = {Goodman, Frederick M.},
ISBN = {0130673420},
Publisher = {Upper Saddle River, NJ : Prentice Hall, c2003.},
Title = {Algebra : abstract and concrete : stressing symmetry.},
URL = {http://login.ezproxy.library.ualberta.ca/login?url=http://search.ebscohost.com/login.aspx?direct=true&db=cat03710a&AN=alb.3989528&site=eds-live&scope=site},
Year = {2003},
}
\bib{homoeg}{misc}{
   title = {Homology Theory --- A Primer | Math $\cap$ Programming},
   year = {2016},
   url = {http://jeremykun.com/2013/04/03/homology-theory-a-primer/},
 }
\bib{mph}{article}{
Author = {Carlsson, Gunnar, and Zomorodian, Afra},
ISSN = {0179-5376},
Journal = {Discrete and Computational Geometry},
Keywords = {Universities and colleges -- Analysis, Computer science -- Analysis},
Number = {1},
Pages = {71},
Title = {The Theory of Multidimensional Persistence.},
Year = {2009},
}
\bib{graphdraw}{misc}{
   title = {Force-directed graph drawing --- Wikipedia, The Free Encyclopedia},
   year = {2016},
   url = {https://en.wikipedia.org/wiki/Force-directed_graph_drawing},
 }
\bib{torusloops}{misc}{
   author = {Wootonjames},
   title = {ToricCodeTorus - Toric code - Wikipedia, the free encyclopedia},
   year = {2010},
   url = {https://en.wikipedia.org/wiki/Toric_code\#/media/File:ToricCodeTorus.png},
} 
\bib{comptopol}{book}{
    AUTHOR = {Edelsbrunner, Herbert and Harer, John L.},
     TITLE = {Computational topology},
      NOTE = {An introduction},
 PUBLISHER = {American Mathematical Society, Providence, RI},
      YEAR = {2010},
     PAGES = {xii+241},
      ISBN = {978-0-8218-4925-5},
   MRCLASS = {00-02 (05C10 52-02 55-02 57-02 65D18 68U05)},
  MRNUMBER = {2572029},
MRREVIEWER = {Andrzej Kozlowski}
}

\end{biblist}
\end{bibsection}

\end{document}